\documentclass{article}

    \PassOptionsToPackage{numbers, compress}{natbib}

\usepackage{enumitem}
\usepackage{wrapfig}
\usepackage{amsthm}
\newtheorem*{proposition*}{Proposition}
\newtheorem*{theorem*}{Theorem}
\usepackage{algorithm}
\usepackage{algpseudocode}
\usepackage{float}        

\usepackage{mathtools}

 \usepackage[preprint]{neurips_2026}

\usepackage[utf8]{inputenc} 
\usepackage[T1]{fontenc}    
\usepackage{hyperref}       
\usepackage{url}            
\usepackage{booktabs}       
\usepackage{amsfonts}       
\usepackage{nicefrac}       
\usepackage{microtype}      
\usepackage{xcolor}         
\usepackage{amsmath}

\usepackage{amssymb}

\usepackage{tikz}
\usetikzlibrary{arrows.meta} 
\usetikzlibrary{positioning}
\usetikzlibrary{shapes.geometric}

\newtheorem{theorem}{Theorem}[section]
\newtheorem{proposition}[theorem]{Proposition}

\newtheorem{remark}[theorem]{Remark}

\author{%
  Binyamin Perets \\
  Technion -- Israel Institute of Technology \\
 \And
Shie Mannor \\
Technion -- Israel Institute of Technology \\
NVIDIA \\
}

\title{Controlling False Discovery in Arbitrarily Structured Hypothesis Spaces via Reproducing Kernels.}

\begin{document}

\maketitle

\begin{abstract}
Large-scale hypothesis testing is central to modern science, where controlling the False Discovery Rate (FDR) has become the standard approach to managing false positives across many simultaneous tests. Hypotheses rarely exist in isolation; they often exhibit structure through proximity, connectivity, or hierarchy. This structure represents both a challenge and an opportunity: while classical methods treat these dependencies as obstacles requiring conservative correction, leveraging them can substantially increase discovery power.
Here, we reframe structured FDR control as a regularized learning problem. By optimizing within a suitable Reproducing Kernel Hilbert Space (RKHS), we introduce a framework that unifies continuous domains, graphs, and hierarchies under a single algorithm through kernel choice alone. This formulation enables smooth solutions in place of the piecewise-constant fits of prior methods, principled likelihood-based hyperparameter selection rather than heuristic tuning, and inference at unobserved locations which in turn supports sample-efficient experimental design. Building on this estimator, we provide two decision rules which we prove to control the FDR. We validate our method on two sources: spatial locations derived from high-dimensional real-world datasets, and a differential gene expression task utilizing protein-protein interaction graphs.

\end{abstract}

\section{Introduction}
The challenge of multiple hypothesis testing is one of the most fundamental in science, addressing our ability to control the probability of false discoveries that arise purely from random chance when evaluating a large number of hypotheses. In practice, this often translates to controlling the False Discovery Rate (FDR). In most domains, it is more the rule than the exception to encounter multiple hypothesis testing problems where hypotheses are not independent, forming a structured hypothesis space: nearby brain regions activate together, genes in the same pathway co-regulate, and spatially proximate cells share molecular signatures. In practice, this structure significantly affects FDR results \citep{Schwartzman2011}. While the hypothesis testing literature uses several related terms inconsistently, we frame our work under the \textbf{structured} hypothesis space setup, i.e., with geometric or topological organization induced by \textbf{relations} between hypotheses (any measurable association, from spatial proximity to hierarchical nesting, that indicates hypotheses ``belong together'' in some sense). This differs from approaches that require explicit \textbf{dependencies}, i.e., formal probabilistic relationships specified through joint or conditional distributions (e.g., positive regression dependency, correlation structures), which need not be present for relations or structure to be meaningful. While classical FDR control methods typically aim to provide guarantees under unknown or arbitrary dependencies \citep{Benjamini2001} (see Section \ref{sec:related_work}), recent work has shown that when hypotheses 
exhibit known structure, this information can be leveraged for 
substantial power gains \citep{Tibshirani_2011, Cai2011}. This represents a conceptual paradigm shift: rather than treating dependencies as obstacles requiring conservative correction, these methods exploit known structure as prior information to enhance discovery while maintaining rigorous FDR control. However, such dependency-based approaches face fundamental 
limitations. They require that the structure be expressed through formal probabilistic dependencies and to be sparse. As demonstrated by \citet{Cai2011} and \citet{Heller2020}, violating either condition renders the problem computationally intractable or statistically infeasible, limiting 
applicability to settings with both known probabilistic form and low-degree dependency graphs. An alternative regularization-based approach operates on relational information without requiring probabilistic interpretation. The leading example, SmoothFDR \citep{tansey2016falsediscoveryratesmoothing}, encodes relations through graph connectivity and uses Total Variation regularization to enforce piecewise-constant significance patterns over connected components. This relaxes the probabilistic requirement but introduces new limitations: the graph structure must be specified a priori, solutions are constrained to piecewise-constant functions poorly suited to smoothly varying phenomena, and the framework lacks principled mechanisms for hyperparameter selection or interpolation to unobserved locations. 

In this paper, we address the general problem by assuming a continuous hypothesis space defined over a $d$-dimensional domain, subsequently demonstrating that almost any discrete problem can be approximated under 
this framework. By modeling the spatially varying structure within a 
Reproducing Kernel Hilbert Space (RKHS) \cite{Aronszajn1950}, diverse 
structures, from spatial grids and phylogenetic trees to protein interaction 
networks, can be incorporated through kernel choice alone. Moreover, we move beyond the restrictive piecewise-constant assumptions of current $L_1$-based methods, enabling flexible smoothness regularization and the generation of continuous significance maps valid at both observed and unobserved locations. Our framework is computationally efficient, scaling as $O(N^2)$ via natural gradient optimization, and distinguishes itself by supporting valid likelihood-based cross-validation through proving the model is properly normalized for all hyperparameter choices.

The remainder of this paper is organized as follows. Section~\ref{sec:related_work} provides preliminary definitions and positions the approach within the literature. Section~\ref{sec:problem_setup} establishes the problem formulation and introduces the spatially-varying mixture model. Section~\ref{sec:pointwise} details our point-wise optimization framework in Reproducing Kernel Hilbert Spaces (RKHS). Section~\ref{sec:decision_rules} translates the resulting estimates into two decision rules with formal FDR guarantees and characterizes their relative power. Section~\ref{sec:entire_domain} extends inference to the entire domain and addresses optimal experimental design for untested locations.
Section~\ref{sec:practical} discusses practical implementation, including likelihood-based hyperparameter selection and the application of our framework to discrete structures via graph kernels. Finally, Section~\ref{sec:experiments} evaluates the method on semi-synthetic benchmarks and on two real-world tasks: HIGGS particle physics and TCGA gene expression on the STRING protein-protein interaction network.

\section{Preliminaries, Related Works, and Positioning}
\label{sec:related_work}

\textbf{p-value} is the probability under the null of obtaining a test statistic at least as extreme as the one observed. \textbf{False Discovery Rate (FDR)} is the expected proportion of false positives in the multiple hypotheses scenario:$FDR = E \left[ \frac{V}{R} \mid R \geq 0 \right] P(R>0)$ where $V$ is the number of false positives and $R$ is the total number of rejections. 
FDR control literature distinguishes between global and \textbf{local FDR} ($lfdr$): $lfdr$ is a Bayesian approach providing per-hypothesis significance by estimating the posterior null probability \citep{Efron2004} using a two-group mixture with marginal density: $f(z) = \alpha_0 f_0(z) + (1-\alpha_0) f_1(z).$  Here, $\alpha_0$ and $(1-\alpha_0)$ are the prior probabilities of the null and alternative hypotheses, respectively, $f_0(z)$ is the null distribution (theoretically uniform for p-values, though often empirically estimated), and $f_1(z)$ is the alternative distribution (typically unknown but assumed to be skewed toward zero). Given this mixture model, the local-fdr for test statistic $z$ is defined as the posterior probability: $\text{lfdr}(z) = P(H_0 | Z=z) = \frac{\alpha_0 f_0(z)}{\alpha_0 f_0(z) + (1-\alpha_0) f_1(z)}$.

\subsection{Related Works and Positioning}
Controlling the FDR for related hypotheses is a long-standing challenge. Classical procedures, such as the Benjamini-Hochberg (B-H) method \citep{B-H}, control the FDR under independence or positive regression dependency, while other \citep{Barber_2015} are required for arbitrary relations. These methods treat the structure as a nuisance to be corrected for. Recent work leverages relations for power gains, yielding structured FDR methods that fall into several categories:
\textbf{Explicit Dependency Modeling} models dependencies as proper statistical relations (e.g., joint or conditional distributions), and have shown that this can lead to optimal \textit{"oracle"} procedures that significantly boost power \citep{Cai2011}. However, those methods  \citep{Tibshirani_2011,Heller2020} typically require restrictive conditions: (1) the dependencies must represent actual probabilistic relationships with dependency graph given explicitly; this lacks mechanisms for handling continuous domains, (2) these dependencies must be sparse as noted by \cite{Heller2020}.
\textbf{Adaptive P-value Methods.} 
Methods such as LAWS \citep{Cai2021} and STRAW \citep{wang2023strawstructureadaptiveweightingprocedure} addresses spatial dependencies through adaptive p-value weighting, constructing local weights from spatial neighbors using discrete windows or local weighted averaging to up-weight or down-weight p-values before applying standard FDR correction. Importantly, these approaches remain heuristic: lacking of explicit model of the spatial structure with no clear way to enforce prior beliefs or a principled mechanism for hyperparameter selection. On the same line, \textbf{but much closer to our framework}, AdaFDR \cite{Zhang2019} and NeuralFDR \cite{xia2017neuralfdrlearningdiscoverythresholds} leverages features associated with each hypothesis through learned reweighting functions which are learned over the features (different from our "structure" that define the relations between hypotheses). Similarly, these methods suffer the same drawbacks, do not allow for explicit structure or priors, and provide no tractability for the bias introduced.
\textbf{Regularization-Based Approaches} focuses on regularization-based approaches that treat structure as a smoothness constraint, with the \textit{SmoothFDR} framework \citep{tansey2016falsediscoveryratesmoothing} represents as leading example. SmoothFDR utilize an Empirical Bayes two-group mixture model where the prior null probability $\pi_0$ is constrained by a Total Variation (TV) penalty over a graph structure, enforcing piecewise-constant solutions that effectively cluster significant hypotheses into discrete spatial blocks. The optimization is performed via an Expectation-Maximization (EM) algorithm, where the M-step requires solving a non-differentiable graph-weighted fused lasso problem, necessitates the use of the Alternating Direction Method of Multipliers (ADMM), involving repeated solution of linear systems of the form $(I + \rho L)x = b$, where $L$ is the graph Laplacian. While this approach provides sharp boundaries on discrete topologies, the complexity, scaling with the number of edges $E$ and the iterations required for ADMM convergence, imposes substantial computational costs. These costs are particularly pronounced when continuous domains require high-resolution discretization to avoid "staircase" artifacts, or as graph density increases.

\textbf{Positioning.} While definitions may overlap, our framework is most closely aligned with the regularization paradigm, characterized by two defining features: (1) we directly model the spatially-varying prior probability $\alpha(\text{loc})$ within the lFDR two-group mixture model, rather than reweighting p-values; and (2) the spatial dependency is encoded explicitly through smoothness regularization in an RKHS.

\section{Problem Setup}
\label{sec:problem_setup}
\subsection{Problem Formulation and the Spatially Varying Mixture Model}
Let the Hypothesis Index Space (HIS) be a continuous domain $\mathcal{H} \subset \mathbb{R}^D$. Our goal is to estimate a spatially varying local False Discovery Rate (lfdr) over $\mathcal{H}$. Let $\mathcal{P}$ denote the set of candidate data-generating distributions. For each location $\text{loc} \in \mathcal{H}$, we define a null hypothesis $H_{\text{loc}} \subset \mathcal{P}$. For any true distribution $P \in \mathcal{P}$, the set of true nulls is the subset of locations where the null hypothesis holds (assumed measurable):     $\mathcal{H}_0(P) := \{ \text{loc} \in \mathcal{H} \mid P \in H_{\text{loc}}\}.$
We assume our observed data, consisting of p-values $\{p_i\}_{i=1}^N$ at discrete spatial coordinates $\{\text{loc}_i\}_{i=1}^N \subset \mathcal{H}$, represent a finite sample from an underlying continuous p-value process, denoted $(p_{\text{loc}}(X))_{\text{loc} \in \mathcal{H}}$. Following \cite{Blanchard_2014}, this \textbf{conceptual} process must satisfy the \emph{joint measurability} condition, which states that the mapping $(\omega, \text{loc}) \mapsto p_{\text{loc}}(X(\omega))$ is jointly measurable. Notably, our method is designed for the case of arbitrary relations and does not require stronger assumptions such as Positive Regression Dependence on a Subset (PRDS). Instead, the structure is modeled implicitly through the choice of a reproducing kernel and smoothness regularization. Following \cite{Efron2004}, we model the p-value process by the conditional PDF where $\alpha(\text{loc})$ is the \emph{spatially varying prior probability} of $H_{\text{loc}}$:
\begin{equation}
    f(p \mid \text{loc}) = \alpha(\text{loc}) f_0(p) + (1 - \alpha(\text{loc})) f_1(p),
    \label{eq:mixture_model}
\end{equation}
Because $\alpha(\text{loc})$ represents a mixing probability, a critical constraint is that $\alpha(\text{loc}) \in [0, 1]$ for all $\text{loc} \in \mathcal{H}$. Here, $f_0(p)$ is the PDF for p-values drawn from a location where the null hypothesis is true. While validity assumptions typically require $f_0(p)$ to be uniform, we follow \cite{Efron2004} in allowing for more flexible, empirically estimated null distributions. Conversely, $f_1(p)$ is the PDF for p-values drawn from locations where \textbf{alternative hypotheses} are true. 

\subsection{Estimating the Component Densities $f_0$ and $f_1$}
\label{subsec:marginal_model}

A key advantage of this formulation is the separation between spatial structure and component densities (Proposition~\ref{prop:marginal}). Spatial dependence enters only through the mixing proportion $\alpha(\text{loc})$, while the null and alternative densities $f_0(z), f_1(z)$ remain spatially invariant.
Consequently, the marginal distribution of test statistics, pooled across all locations, follows the classical lFDR. Hence, we can estimate $f_0$ and $f_1$ using any methods for non-spatial lFDR (e.g central matching or empirical null fitting \cite{Efron2004}), treating spatial coordinates as irrelevant to the marginal estimation.

\begin{proposition}[Marginal Density Independence]
\label{prop:marginal}
Under the spatially-varying mixture model (Eq.~\ref{eq:mixture_model}), the marginal density of test statistics follows the standard two-group mixture:
\begin{equation}
    f(z) = \bar{\alpha} f_0(z) + (1-\bar{\alpha}) f_1(z)
\end{equation}
where $\bar{\alpha} = \mathbb{E}_{\text{loc}}[\alpha(\text{loc})]$ is the spatial average of the null probability.
\end{proposition}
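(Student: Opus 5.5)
The plan is to treat the location as a random variable and marginalize it out by the law of total probability. I introduce a sampling distribution $\mu$ on the hypothesis index space $\mathcal{H}$, so that $\text{loc} \sim \mu$. This makes precise the expectation $\mathbb{E}_{\text{loc}}$ appearing in the statement; for example, $\mu$ may be the empirical measure $\frac{1}{N}\sum_{i}\delta_{\text{loc}_i}$ on the observed coordinates, or a density on $\mathcal{H}$. With this in place, the conditional density in Eq.~\eqref{eq:mixture_model} is read as $f(z \mid \text{loc})$, and the pooled marginal is defined by
\[
f(z) = \int_{\mathcal{H}} f(z \mid \text{loc})\, d\mu(\text{loc}).
\]
Two facts justify this definition. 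The joint measurability condition assumed in the problem setup, together with measurability of $\alpha$, makes the integrand a jointly measurable map of $(z, \text{loc})$. The integrand is also nonnegative, so Tonelli's theorem shows that the formula defines a genuine probability density in $z$ that integrates to one.

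Next I substitute the mixture form and use linearity of the integral:
\[
f(z) = f_0(z)\int \alpha\, d\mu + f_1(z)\int (1-\alpha)\, d\mu .
\]
The key structural point is that $f_0$ and $f_1$ do not depend on $\text{loc}$, so they factor out of the integral. Writing $\bar\alpha = \int \alpha\, d\mu$, the second integral is $1-\bar\alpha$ because $\mu$ is a probability measure. Since $\alpha(\text{loc}) \in [0,1]$ pointwise, $\bar\alpha \in [0,1]$ as well, so the result is a valid two-group mixture of exactly the classical form.

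The only real obstacle is conceptual rather than computational: the statement's phrase ``spatial average'' needs a concrete measure. I would state explicitly that the result holds for any choice of $\mu$, with $\bar\alpha$ depending on that choice. I would also note that for the finite-sample case the pooled density of the observed $p_i$ corresponds to the empirical measure. This covers the intended use, namely estimating $f_0$ and $f_1$ from the pooled test statistics as in the classical lFDR.
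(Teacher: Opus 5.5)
Your proposal is correct and is essentially the paper's own argument: integrate $f(z\mid\text{loc})$ against a spatial sampling distribution, pull the location-independent $f_0$ and $f_1$ out of the integral, and identify the two remaining integrals as $\bar\alpha$ and $1-\bar\alpha$. Using a general probability measure $\mu$ (e.g.\ the empirical measure on the observed locations) instead of the paper's density $p(\text{loc})$ is only a mild generalization, and the measurability you need comes from $\alpha$, $f_0$ and $f_1$ themselves rather than from the joint measurability condition on the p-value process.
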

Proof in Supplementary Section B. The specific estimation procedure employed in Section~\ref{sec:experiments}.

\section{Point-wise Solution}
\label{sec:pointwise}
We frame our problem as a Tikhonov-regularized maximum likelihood problem within a Reproducing Kernel Hilbert Space (RKHS), $\mathcal{H}_K$. Let $\|\cdot\|_{\mathcal{H}_K}$ denote the norm induced by the positive definite kernel $K: \mathcal{X} \times \mathcal{X} \to \mathbb{R}$. Our goal is to estimate $\alpha(\text{loc}_i)$ at the observed locations $\{\text{loc}_i\}_{i=1}^N$ where we have test statistics $\{z_i\}_{i=1}^N$ (or equivalently, p-values $\{p_i\}_{i=1}^N$). Notice Section \ref{sec:entire_domain} will address the case of estimating on the entire domain. 
We regularize toward the global prior $\bar\alpha$ rather than toward zero, so hypotheses with weak spatial information default to the marginal estimate and the $\lambda_{\text{reg}}\to\infty$ limit recovers the classical non-spatial two-group mixture. Let $\mathbf{c}_{\bar\alpha}$ denote the precomputed coefficient vector satisfying $K\mathbf{c}_{\bar\alpha} = \bar\alpha\,\mathbf{1}$,
obtained once via conjugate gradient before optimization begins; the centering adds no per-iteration cost and preserves the kernel-cancellation property established below. Our objective is then to minimize the negative log-likelihood:
\begin{align}
\min_{\alpha \in \mathcal{H}_K} \quad \mathcal{J}(\alpha)
 = -\sum_{i=1}^N \log\bigl( \alpha(\text{loc}_i) f_0(p_i)
   + (1-\alpha(\text{loc}_i)) f_1(p_i)\bigr)
   + \lambda_{\text{reg}} \|\alpha - \bar\alpha\|^2_{\mathcal{H}_K}
\label{eq:objective_function}
\end{align}
with $\lambda_{\text{reg}} > 0$ controlling the regularization strength. Since the empirical risk term depends on $\alpha$ solely through its evaluations at the finite set of points $\{\text{loc}_i\}_{i=1}^N$, the Generalized Representer Theorem \citep{Schlkopf2001} guarantees that the minimizer lies in the finite-dimensional subspace spanned by the kernel sections centered at the data: $\alpha(\cdot) = \sum_{i=1}^N c_i K(\cdot, \text{loc}_i).$ This reduces the variational problem to optimization over the coefficient vector $\mathbf{c} \in \mathbb{R}^N$. The core challenge in this formulation is restricting $\alpha(\text{loc}_i) \in [0,1]$ at all observed locations while maintaining a tractable convex optimization problem. We address this challenge here under the point-wise setting, with Section \ref{sec:entire_domain} extending the approach to the entire domain. Rather than imposing hard box constraints $\alpha \in [0,1]$, which would necessitate constrained optimization methods (see below), we introduce a \emph{soft boundary penalty} that penalizes violations per unit interval:
\begin{equation}
    \Lambda_{\text{bound}}(\alpha) = \sum_{i=1}^N \left[ \max(0, \alpha_i - 1)^2 + \max(0, -\alpha_i)^2 \right]
\end{equation}
This formulation maintains convexity (each term is a composition of the convex squared hinge loss with linear function values), is differentiable almost everywhere enabling efficient gradient-based optimization, and is inactive when constraints are satisfied ($\alpha_i \in [0,1]$), ensuring it does not interfere with well-behaved solutions. Alternative approaches include hard projection methods or squashing functions (e.g., logistic transformations); however, as detailed in Supplementary Section \ref{supp:alternative_methods}, we found the penalty-based formulation to be both elegant and sufficient in practice. Figure \ref{fig:full_width_estimation}(a) validates that $\alpha$ remains well-bounded within $[0,1]$ across all datasets evaluated in Section \ref{sec:experiments}, with \ref{fig:full_width_estimation}(b) shows this happens before convergence.
Our complete point-wise objective becomes then:
\begin{equation}
    \min_{\alpha \in \mathcal{H}_K} \mathcal{L}(\alpha)
    = -\sum_{i=1}^N \log h_i(\alpha)
      + \lambda_{\text{reg}} \|\alpha - \bar\alpha\|^2_{\mathcal{H}_K}
      + \lambda_{\text{bound}} \Lambda_{\text{bound}}(\alpha)
    \label{eq:pointwise_objective}
\end{equation}
where $h_i(\alpha) = \alpha(\text{loc}_i) f_0(p_i) + (1 - \alpha(\text{loc}_i)) f_1(p_i)$.

\begin{figure*}[t] 
    \centering
    \includegraphics[width=\textwidth]{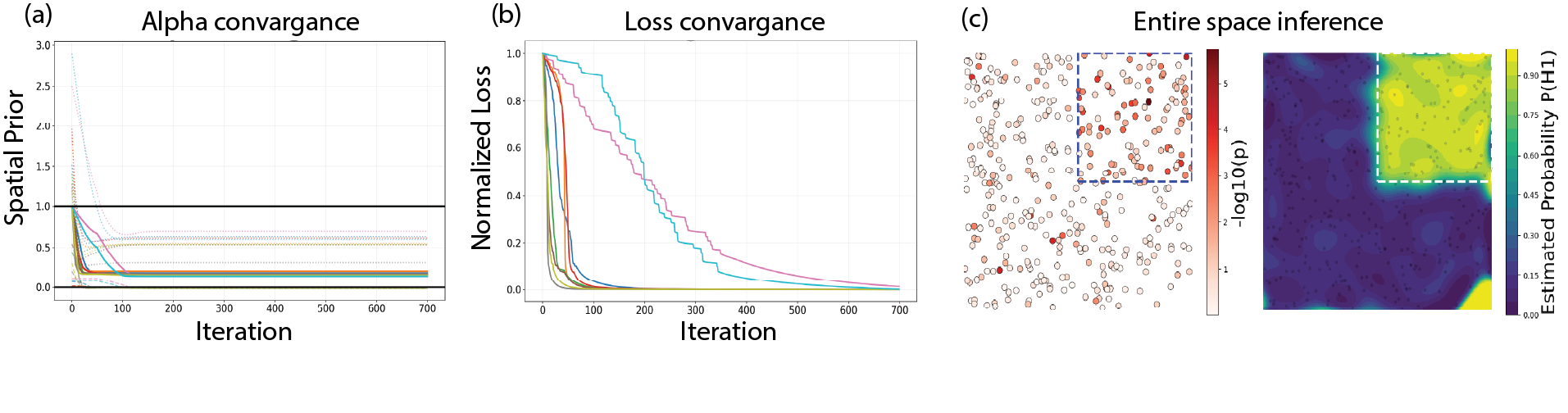}
    \caption{(a) Alpha convergence to [0,1] across all datasets- broken lines indicates min/max values, full line the mean. (b) Loss convergence. (c) Learning on entire space- Left: observed p-values. Right: $\alpha$ over the entire space.}
    \label{fig:full_width_estimation}
\end{figure*}

\paragraph{Optimization via Natural Gradient Descent}
While convex in $\mathbf{c} \in \mathbb{R}^N$, standard optimizers operating in Euclidean geometry struggle with RKHS objectives: for smooth kernels the Gram matrix $K$ has rapidly decaying eigenvalues, creating ill-conditioned optimization landscapes with narrow valleys. In our experiments, even sophisticated adaptive methods fail to overcome this. A more principled approach is the \emph{natural gradient} \citep{Munkres2018}, which accounts for the geometry of the parameter space, and proved beneficial in similar setups \cite{pmlr-v235-kozdoba24a}. For RKHS optimization, the natural gradient is defined with respect to the metric tensor $K$, $\tilde{\nabla}_{\alpha} \mathcal{L} = K^{-1} \nabla_{\mathbf{c}} \mathcal{L}$. Naively, this appears to trade one ill-conditioned operation for another, since $K^{-1}$ inherits the conditioning of $K$. Fortunately, the structure of our objective leads to an exact cancellation:

\begin{proposition}[Kernel Cancellation in Natural Gradient]
\label{lem:kernel_cancel}
For the point-wise objective in Equation~\ref{eq:pointwise_objective}, the natural gradient admits the simple form:
\begin{equation}
    \tilde{\nabla}_{\alpha} \mathcal{L}
    = \mathbf{w} + 2\lambda_{\text{reg}}\, (\mathbf{c} - \mathbf{c}_{\bar\alpha})
      + \lambda_{\text{bound}} \nabla_{\alpha} \Lambda_{\text{bound}},
    \quad
    w_i = -\frac{f_0(p_i) - f_1(p_i)}{\alpha_i f_0(p_i) + (1-\alpha_i) f_1(p_i)}
    \label{eq:natural_gradient_pointwise}
\end{equation}

With $[\nabla_{\alpha} \Lambda_{\text{bound}}]_i = 2(\alpha_i - 1)$ if $\alpha_i > 1$, $2\alpha_i$ if $\alpha_i < 0$, and $0$ otherwise. And $\mathbf{c}_{\bar\alpha}$ precomputed once.
\end{proposition}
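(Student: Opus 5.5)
We will work entirely in the coefficient parametrization supplied by the Representer Theorem. Write $\alpha(\cdot)=\sum_j c_j K(\cdot,\text{loc}_j)$, so the vector of values at the observed locations is $\boldsymbol\alpha = K\mathbf{c}$, with $K$ the (symmetric, positive definite) Gram matrix. The plan is to compute the Euclidean gradient $\nabla_{\mathbf c}\mathcal L$ of each of the three terms of Equation~\ref{eq:pointwise_objective}. In each case we will exhibit it as $K$ times a vector, and then apply $\tilde\nabla_\alpha\mathcal L = K^{-1}\nabla_{\mathbf c}\mathcal L$ so that the leading $K$ cancels.

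\textbf{Data-fit and boundary terms.} Both depend on $\mathbf c$ only through $\boldsymbol\alpha=K\mathbf c$. By the chain rule, $\nabla_{\mathbf c} = K^\top \nabla_{\boldsymbol\alpha} = K\nabla_{\boldsymbol\alpha}$, using symmetry of $K$.

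For the likelihood term, $\partial_{\alpha_i}[-\log h_i] = -(f_0(p_i)-f_1(p_i))/h_i$, which is exactly $w_i$. The term $h_j$ depends only on $\alpha_j$, so the gradient in $\boldsymbol\alpha$ is $\mathbf w$ componentwise.

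For $\Lambda_{\text{bound}}$, differentiate the squared hinges $\max(0,\alpha_i-1)^2$ and $\max(0,-\alpha_i)^2$. This gives $2(\alpha_i-1)$ on $\{\alpha_i>1\}$ and $0$ on $[0,1]$. On $\{\alpha_i<0\}$ the derivative is $-2\max(0,-\alpha_i)=2\alpha_i$, matching the stated formula. The squared hinge is in fact $C^1$, so the kinks at $0$ and $1$ cause no issue, since both one-sided derivatives vanish there.

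Together these contribute $K(\mathbf w+\lambda_{\text{bound}}\nabla_\alpha\Lambda_{\text{bound}})$ to $\nabla_{\mathbf c}\mathcal L$.

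\textbf{Regularizer.} Here lies the only subtle point: interpreting the constant function $\bar\alpha$ inside $\mathcal H_K$. The constant need not lie in $\mathcal H_K$, or in the span of the kernel sections. We follow the paper's convention and replace it by its representative $\sum_j (c_{\bar\alpha})_j K(\cdot,\text{loc}_j)$ with $K\mathbf c_{\bar\alpha}=\bar\alpha\mathbf 1$. This is the unique element of the span that interpolates $\bar\alpha$ at the data, and the minimal-norm interpolant in $\mathcal H_K$. Then $\alpha-\bar\alpha$ has coefficient vector $\mathbf c-\mathbf c_{\bar\alpha}$. By the reproducing property, $\|\sum_j d_jK(\cdot,\text{loc}_j)\|^2_{\mathcal H_K}=\mathbf d^\top K\mathbf d$, so the regularizer equals $\lambda_{\text{reg}}(\mathbf c-\mathbf c_{\bar\alpha})^\top K(\mathbf c-\mathbf c_{\bar\alpha})$. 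Its gradient is $2\lambda_{\text{reg}}K(\mathbf c-\mathbf c_{\bar\alpha})$, again using symmetry of $K$. I expect justifying this identification to be the main obstacle; everything else is routine calculus.

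\textbf{Assembly.} Summing the three contributions gives
\[
\nabla_{\mathbf c}\mathcal L = K\bigl(\mathbf w + 2\lambda_{\text{reg}}(\mathbf c-\mathbf c_{\bar\alpha}) + \lambda_{\text{bound}}\nabla_\alpha\Lambda_{\text{bound}}\bigr).
\]
Strict positive definiteness makes $K$ invertible, so multiplying by $K^{-1}$ cancels the prefactor and yields Equation~\ref{eq:natural_gradient_pointwise}. The resulting expression involves no inversion of $K$ beyond the single precomputation of $\mathbf c_{\bar\alpha}$, which is the claimed cancellation.
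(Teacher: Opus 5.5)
Your proposal is correct and follows the paper's proof essentially verbatim: compute the Euclidean gradient of each of the three terms in coefficient space, note that each is $K$ times a vector (via $\boldsymbol\alpha = K\mathbf c$ and the symmetry of $K$), factor out $K$, and cancel it with $K^{-1}$. Your remarks on the $C^1$ squared hinge and on reading $\|\alpha-\bar\alpha\|^2_{\mathcal H_K}$ through $\mathbf c_{\bar\alpha}$ add rigor the paper omits, since it simply defines the regularizer as $\lambda_{\text{reg}}(\mathbf c-\mathbf c_{\bar\alpha})^\top K(\mathbf c-\mathbf c_{\bar\alpha})$; moreover, if the constant does lie in $\mathcal H_K$, this form differs from the true squared norm only by a $\mathbf c$-independent constant, so the gradient is unchanged.
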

Proof in Supplementary Section C. The update rule depends only on residuals $\mathbf{w}$ and current parameters $\mathbf{c}$, eliminating $K$ from the gradient step entirely. The kernel matrix is required only for the forward evaluation $\boldsymbol{\alpha} = K\mathbf{c}$, which is numerically stable and yields significantly faster convergence.

\section{Decision Rules and FDR Control}
\label{sec:decision_rules}
Here we address the translation of estimated spatial null probabilities $\hat{\alpha}(\text{loc})$ into formal rejection rules with provable FDR guarantees. The dominant paradigm \citep{lei2018adapt, Barber_2015, barber2020robust} achieves FDR guarantees by assuming calibrated p-values, reflecting a different philosophical stance than the one outlined in the introduction, where p-values may be erroneous and side information is a source of evidence no less important. Nonetheless, formal guarantees under the calibration assumption remain important. We present two complementary rules, both provide FDR control regardless of the spatial information validity. We also show that under correct spatial specifications, one strictly dominates the other.

\paragraph{The Model-Free FDR Control}
decision rule exploits Proposition~\ref{prop:marginal}, which establishes that the marginal distribution of p-values follows the classical 
two-group mixture regardless of spatial structure. Within the spatially 
enriched candidate set, the classical two-group guarantee of \cite{san-cai2007} applies directly. Crucially, given \cite{san-cai2007} is an average based rule, the reduced null proportion may yield discoveries beyond those of the purely marginal procedure. The procedure is as follows: Estimate $\hat{\alpha}(\text{loc}_i)$ via Stage~1 (Section~\ref{sec:pointwise}), and separately estimate $f_0, f_1, \bar{\alpha}$ from the marginal two-group mixture (Proposition~\ref{prop:marginal}). First, compute the marginal lfdr for every hypothesis: $\text{lfdr}_{\text{marg}}(p_i) = \frac{\bar{\alpha}\, f_0(p_i)}{\bar{\alpha}\, f_0(p_i) + (1 - \bar{\alpha})\, f_1(p_i)}.$ Then form the candidate set $S = \{i : \hat{\alpha}(\text{loc}_i) \leq q\}$,  and within $S$, reject the largest set $R_1 \subseteq S$ satisfying $\frac{1}{|R_1|}\sum_{i \in R_1} \text{lfdr}_{\text{marg}}(p_i) \leq q$.

\paragraph{The Spatial FDR Control via Mirror Statistics decision rule}
obtains FDR control through the mirror-statistic framework \citep{Barber_2015}. Define the \emph{rejection score} and \emph{mirror score}:
\begin{equation}
\begin{aligned}
    T_i &= \frac{\hat{\alpha}(\text{loc}_i)\, f_0(p_i)}{\hat{\alpha}(\text{loc}_i)\, f_0(p_i) + (1 - \hat{\alpha}(\text{loc}_i))\, f_1(p_i)}, 
    \tilde{T}_i = \frac{\hat{\alpha}(\text{loc}_i)\, f_0(1-p_i)}{\hat{\alpha}(\text{loc}_i)\, f_0(1-p_i) + (1 - \hat{\alpha}(\text{loc}_i))\, f_1(1-p_i)}
\end{aligned}
\label{eq:rejection_mirror_scores}
\end{equation}
The mirror score replaces $p_i$ with its complement $1-p_i$. The procedure 
computes $T_i$ and $\tilde{T}_i$ for all hypotheses, then sets the threshold 
via the Barber--Cand\`{e}s step-up rule:
\begin{equation}
    \hat{t}_\alpha = \max\left\{t :\; \frac{1 + \#\{i : \tilde{T}_i \leq t\}}{1 \vee \#\{i : T_i \leq t\}} \leq \alpha \right\}, \quad \mathcal{R} = \{i : T_i \leq \hat{t}_\alpha\}.
    \label{eq:threshold_rule}
\end{equation}
Under the spatially varying mixture model (Eq.~\ref{eq:mixture_model}) with mirror-conservative null p-values and $\hat{\alpha}$ obtained from the RKHS penalized likelihood with $\lambda_{\text{reg}} > 0$: the procedure controls FDR at level $\alpha + O(1/N)$; see Theorem~\ref{thm:rule2_fdr}. The $O(1/N)$ slack can be eliminated at the cost of $N$ leave-one-out estimator fits, but in our experiments this refinement was not empirically necessary. The argument adapts the mirror-statistic framework of \citet{barber2020robust} via an approximate invariance argument: the RKHS regularization ensures that flipping any single null's p-value to $1-p_i$ perturbs $\hat{\alpha}$ by at most $O(1/N)$, preserving the coin-flip symmetry required by the Barber--Cand\`{e}s inequality.

Having established that both rules provide valid FDR control, Theorem \ref{thm:spatial_dominance} shows that the Spatial rule is strictly more powerful under \textbf{correct structure specification}. Throughout, we use the marginal versions of FDR and FNR introduced by \citet{san-cai2007}: $\mathrm{mFDR} = \mathbb{E}[V]/\mathbb{E}[R]$ and $\mathrm{mFNR} = \mathbb{E}[T]/\mathbb{E}[N - R]$, where $V$ counts false rejections, $R$ counts total rejections, and $T$ counts missed true alternatives. These large-sample analogs of FDR/FNR are the natural objects for compound-decision optimality. The proof (Supplementary Section~\ref{supp:power_proof}) applies the compound decision framework of \citet{san-cai2007}: under correct specification, the spatial lfdr equals the oracle posterior $P(\theta_i = 0 \mid p_i, \text{loc}_i)$, which is the Bayes-optimal test statistic \citep{san-cai2007}.

\section{Solution Over the Entire Domain}
\label{sec:entire_domain}
While the point-wise approach yields reliable estimates at observed coordinates, extending inference to the entire continuous domain $\mathcal{H}$ requires ensuring that $\alpha(\text{loc}) \in [0,1]$ holds globally. This introduces a significant challenge: enforcing the bounds $\forall \text{loc} \in \mathcal{H}$ generates an infinite number of constraints, which is not possible even with the finite-dimensional coefficient vector. Formally, this is known as a \emph{Semi-Infinite Programming (SIP)} problem \citep{Hettich1993, Lpez2007}. In Supplementary Section \ref{supp:alternative_methods}, we provide a detailed analysis of classical SIP solvers (such as exchange methods and barrier functions) and demonstrate why they are ill-suited. 
Consequently, we favor a robust two-stage approach that separates the problem into two stages, each addressing a specific aspect of the challenge: \textbf{The first stage} is the described point-wise estimation which obtain estimates $\{\hat{\alpha}_i\}_{i=1}^N$ at the observed locations. \textbf{The second stage} treats these point-wise estimates as target labels and learns a globally valid function $\alpha(\text{loc}) \in [0,1]$ that closely approximates them across the entire domain. This is fundamentally a problem of \emph{learning a probability-valued function}: from an information-theoretic perspective, we seek a function $\alpha(\text{loc})$ that best approximates the target distribution $\hat{\alpha}(\text{loc})$ over the spatial domain. Leveraging the Kullback-Leibler (KL) divergence, and since each $\alpha(\text{loc})$ represents a Bernoulli parameter (the probability of the null hypothesis), the divergence between $\hat{\alpha}$ and $\alpha$ is: $D_{\text{KL}}(\hat{\alpha} \| \alpha) = \hat{\alpha}\log\frac{\hat{\alpha}}{\alpha} + (1-\hat{\alpha})\log\frac{1-\hat{\alpha}}{1-\alpha}.$ 

Conceptualizing the true spatial function $\alpha^*(\text{loc})$ as existing over the entire domain $\mathcal{H}$, with the point-wise estimates $\{\hat{\alpha}_i\}$ as noisy observations at discrete locations, we seek to minimize the expected KL divergence over the spatial domain plus a smoothness regularization:
\begin{equation}
    \min_{\alpha \in \mathcal{H}_K} \int_{\mathcal{H}} D_{\text{KL}}(\alpha^*(\text{loc}) \| \alpha(\text{loc})) \, p(\text{loc}) \, d\text{loc} + \lambda_{\text{global}} \|\alpha\|^2_{\mathcal{H}_K}
\end{equation}

where $p(\text{loc})$ measures over the domain and $\lambda_{\text{global}} > 0$ controls smoothness. 
For the data fidelity term, minimizing KL is equivalent to minimizing the cross-entropy, since $\hat{\alpha}$ entropy is constant with respect to $\alpha$. Hence, we can approximate the integral using our finite observations $\{\hat{\alpha}_i, \text{loc}_i\}_{i=1}^N$, treating $p(\text{loc})$ as an empirical measure concentrated at the data points, yielding the \emph{logistic loss}:
\begin{align}
    \mathcal{L}_{\text{CE}}(\alpha) = -\sum_{i=1}^N \left[\hat{\alpha}_i \log\alpha(\text{loc}_i) + (1-\hat{\alpha}_i) \log(1-\alpha(\text{loc}_i))\right]
    \label{eq:cross_entropy}
\end{align}
Crucially, the logistic loss is a \emph{proper scoring rule} \citep{Gneiting2007}, minimizing expectation when $\alpha = \hat{\alpha}$, ensuring that our learned function faithfully approximates the point-wise estimates without systematic bias.
To ensure $\alpha(\text{loc}) \in [0,1]$ everywhere while minimizing this discrepancy, we use a squashing function. The natural choice is the logistic function $\sigma(z) = \frac{1}{1+e^{-z}}$, which maps the entire real line to $[0,1]$. This leads us to parameterize $\alpha(\text{loc}) = \sigma(g(\text{loc}))$ where $g \in \mathcal{H}_K$ is unconstrained. The resulting optimization is then the \emph{kernel logistic regression}:
\begin{align}
    \mathbf{c}^* = \arg\min_{\mathbf{c} \in \mathbb{R}^N} -\sum_{i=1}^N \left[\hat{\alpha}_i \log\sigma(g_i) + (1-\hat{\alpha}_i) \log(1-\sigma(g_i))\right] + \lambda_{\text{global}} \mathbf{c}^T K \mathbf{c}
    \label{eq:kernel_logistic}
\end{align}
Depends on application, one may choose different regularization parameters $\lambda_{\text{reg}}$ (Stage 1) and $\lambda_{\text{global}}$ (Stage 2) to separately control the smoothness of the point-wise fit and the global interpolation. In practice, we typically set $\lambda_{\text{global}} \leq \lambda_{\text{reg}}$ to allow the global function to closely follow the point-wise estimates while maintaining spatial coherence. Another key practical consideration is how to initialize Stage 2. The coefficients $\mathbf{c}_{\text{point}}$ from Stage 1 provide a natural warm start, but correspond to a different parameterization (direct $\alpha$ vs. squashed $g$). We use the inverse logistic transformation $ \mathbf{c}^{(0)} = K^{-1} \cdot \text{logit}(\hat{\boldsymbol{\alpha}})$  where $\text{logit}(\alpha) = \log(\alpha/(1-\alpha))$ and we clip $\hat{\alpha}_i$ to $[\epsilon, 1-\epsilon]$ for $\epsilon > 0$ (e.g., $\epsilon = 0.01$) to avoid numerical issues. 
Crucially, the logistic loss formulation yields a convex optimization problem, avoiding the local minima issues inherent in direct squashing approaches, and by applying natural gradients as in Section \ref{sec:pointwise}, we obtain a remarkably simple update rule :
\begin{equation}
    \tilde{\nabla}_{\mathbf{c}} \mathcal{L}_{\text{logistic}} = (\boldsymbol{\sigma} - \hat{\boldsymbol{\alpha}}) + 2\lambda_{\text{global}} \mathbf{c}
    \label{eq:natural_gradient_logistic}
\end{equation}

Figure \ref{fig:full_width_estimation}(c) illustrates the alpha inference on a synthetic 2-dimensional dataset. The left panel shows the observed locations with p-values encoded by color, while the right panel presents the inferred alpha function. Notice that while the p-values are relatively noisy, our method was able to correctly infer the regions of high alpha. 

\paragraph{Efficient Testing via Optimal Experimental Design.}
The ability to infer $\alpha(\text{loc})$ at unseen locations enables a data-efficient testing protocol: strategically select locations to test, then leverage spatial structure for decisions at remaining locations. This raises two questions: (1) which locations should we test when designing the experiment? (2) how do we quantify uncertainty at untested locations to determine if predictions are reliable?
For Question 1, we select $N_0$ locations using density-aware A-optimal 
design (Supplementary Section~\ref{supp:greedy_algorithm}), which minimizes 
average prediction variance across the domain. A parameter $\gamma$ controls the cluster-isolation tradeoff: from dense regions ($\gamma > 0$) to isolated points ($\gamma < 0$), with $\gamma = 0$ yields uniform weighting.
For Question 2, we quantify uncertainty at untested locations via Laplace 
approximation (Supplementary Section~\ref{supp:posterior_variance_theorem}), 
which combines spatial uncertainty (distance from observed locations) and 
statistical uncertainty (Fisher information). We evaluate this framework empirically in Section~\ref{sec:experiments}.

\section{Practical Considerations}
\label{sec:practical}

\paragraph{Hyperparameter Selection.}
\label{sec:hyperparameter_selection}
A key advantage of our framework is principled likelihood-based cross-validation for all hyperparameters (regularization strength $\lambda_{\text{reg}}$, kernel type, length-scale, and smoothness). As Proposition~\ref{prop:normalization} shows, the conditional density $f(z \mid \text{loc}; \theta)$ is properly normalized for every $\theta$, since the spatial sampling distribution $p(\text{loc})$ does not depend on $\theta$, and the constraint $\int_{\mathcal{H}} \alpha(\text{loc}; \theta)\,p(\text{loc})\,d\text{loc} = \bar{\alpha}$ is fixed by the marginal data alone. 

\paragraph{Kernel Choice.}
\label{sec:kernel_selection}
Kernel choice encodes both the geometry and the smoothness of the function space. For continuous domains we employ Matérn kernels, whose smoothness parameter $\nu > d/2$ guarantees Sobolev regularity for reliable interpolation \citep{Khanfer2024, pmlr-v235-kozdoba24a}. For discrete topologies the framework extends seamlessly through graph kernels: diffusion kernels \citep{kondor2002diffusion, Smola2003, Tripathi2016} replace the $L_1$ Total Variation penalty of prior methods with an $L_2$ smoothness penalty $\sum_{(i,j)\in E} W_{ij}(\alpha_i - \alpha_j)^2$ that preserves network modularity (e.g., for gene--gene interactions), and hyperbolic embeddings via Sarkar's theorem \citep{Sarkar2012} accommodate hierarchical structures (e.g., gene ontologies) in just two dimensions, where Euclidean kernels would otherwise require dimension exponential in tree depth. Supplementary Section~\ref{supp:pick_kernel} provides the in-depth treatment for kernels selection.

\section{Evaluations}
\label{sec:experiments}

\begin{figure*}[t]
    \centering
    \includegraphics[width=\textwidth]{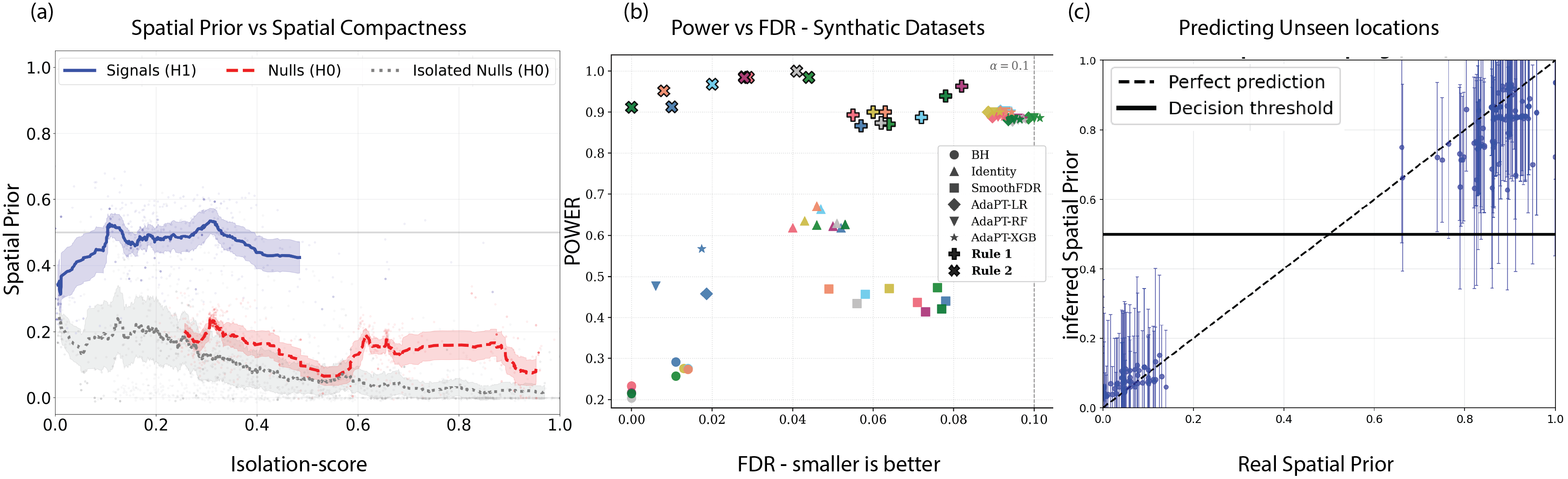}
    \caption{(a) Inferred spatial prior $(1-\alpha)$ versus geometric isolation score across semi-synthetic datasets. (b) Power vs. FDR on the 10 semi-synthetic datasets across all baselines at $\alpha = 0.10$. (c) Predicted spatial prior at held-out locations versus ground truth computed from full dataset. Error lines are the CI for 95\% confidence.}
    \label{fig:results}
\end{figure*}

Evaluating the proposed framework requires both known relations between hypotheses and ground-truth significance labels. We employ two complementary benchmarks: (i) a \textbf{semi-synthetic protocol} over 10 high-dimensional real-world datasets, where the spatial structure is real but p-values are injected with known labels; and (ii) two \textbf{real-world tasks} (HIGGS particle physics and TCGA differential expression) where both p-values and ground-truth labels are genuine. Across all evaluations, we employ both decision rules, which guarantees FDR control under arbitrary misspecification of the structure information. Baselines include Benjamini-Hochberg (BH), a non-spatial two-group model, SmoothFDR \citep{tansey2016falsediscoveryratesmoothing}, and AdaPT \citep{lei2018adapt} with three learners (logistic regression, random forest, XGBoost).

\paragraph{Semi-Synthetic Benchmarks.}
We utilize 10 high-dimensional real-world datasets onto which we synthetically inject p-values. For each dataset, we first cluster the data to identify stable clusters, then select two clusters mutually distant in kernel space: one serves as the signal-enriched cluster ($C_{\text{signal}}$), the other as the null-enriched cluster ($C_{\text{null}}$). Random null locations with no specific spatial structure are then added. A cluster corruption parameter ($\gamma=0.2$) randomly flips 20\% of the labels, creating outlier nulls and signals within otherwise homogeneous regions. P-values are generated from these labels, with nulls drawn from $\mathcal{U}[0, 1]$ and alternatives from $\text{Beta}(0.05, 5)$. Our method fits the local FDR using a Matérn kernel (distinct from the one used for clustering), with hyperparameters (length-scale, $\lambda_{\text{reg}}$) selected via cross-validation and boundary penalty fixed at $\lambda_{\text{bound}} = 500$. For the baselines, SmoothFDR converts p-values to z-scores via the probit transform and runs the graph fused lasso with a Gaussian two-group model; AdaPT uses 50 threshold iterations with the spatial coordinates as covariates.
Figure~\ref{fig:results}(a) plots $(1-\alpha)$ against geometric isolation, defined as inverted normalized kernel similarity to the $k$ nearest "$H_1$" neighbors, reflecting distance from signal-enriched regions. Red points belong to the $H_1$ cluster, blue to $H_0$, and gray to isolated locations. The method achieves clear separation between labels across isolation scores. While $H_0$ cluster maintain appropriate prior, they generally trend up because the null is distributed according to $U[0,1]$, so after spatial smoothing the low tail pulls the local mean, an effect less pronounced in isolated locations. 
Figure~\ref{fig:results}(b) reports FDR and power at $\alpha = 0.10$ across all datasets. While all methods control the FDR, \texttt{BH}, \texttt{Two-group}, and \texttt{SmoothFDR} forfeit substantial power (mean below $0.65$), so the meaningful comparison is against the AdaPT family, which saturates the FDR budget at mean power $0.843$--$0.855$. \textbf{Rule~1} matches AdaPT's power (mean $0.899$) while incurring roughly $20\%$ fewer false discoveries (mean FDR $0.066$ vs.\ $0.084$--$0.086$). \textbf{Rule~2} dominates the frontier outright: it attains strictly higher power than every baseline on every dataset, with gains over the best AdaPT variant ranging from $+0.030$ (\texttt{Hepatitis}) to $+0.346$ (\texttt{skin}), at a mean FDR of $0.023$, roughly a quarter of the available budget. The advantage is starkest on \texttt{skin}, where Rule~2 reaches power $0.913$ at FDR $0.010$ against AdaPT at $0.567$.

\paragraph{Real-World Datasets.}
\textbf{HIGGS particle physics}: the dataset \citep{Chen_2022} contains 98K simulated collision events binned into $N \approx 500$ groups by k-means clustering on kinematic features. For each bin, we compute a one-sided binomial p-value testing whether the rate of events satisfying $m_{wwbb} < 0.8$ exceeds the expected background rate. Ground-truth labels (signal-enriched vs. background-dominated bins) are used only for evaluation. \textbf{TCGA differential expression}: we test $N = 3{,}000$ genes between tumor types (BRCA cohort) using DESeq2 p-values from TCGA RNA-seq data. Ground-truth labels are derived from MSigDB Hallmark pathway membership: genes in 22 cancer-relevant pathways are labeled positive; genes in 4 unrelated pathways are labeled negative. FDR and power are evaluated on labeled discoveries. As for implementation details: For HIGGS, we employ a Matérn kernel, encoding nearby bins in kinematic space. For TCGA, we use a diffusion kernel ($\beta = 2$) on the STRING protein-protein interaction network \citep{Szklarczyk2022} (204,372 edges), widely used in systems biology to capture pathway-level signals \citep{Tripathi2016}; the diffusion parameter and $\lambda_{\text{reg}}$ are automatically tuned via C-V. For AdaPT baselines, covariates are bin centroids in kinematic space for HIGGS, and spectral coordinates of the PPI Laplacian for TCGA (matching the kernel information).

\begin{wraptable}{r}{0.5\columnwidth}
\centering
\caption{FDR\,/\,Power ($\alpha = 0.10$) .}
\label{tab:realworld}
\small
\setlength{\tabcolsep}{4pt}
\begin{tabular}{lcc}
\toprule
Method & HIGGS & TCGA DE \\
\midrule
BH        & .253/.564              & .065/.643              \\
Two-group  & .000/.000              & .065/.674              \\
SmoothFDR & .597/.922              & .540/.946              \\
AdaPT-LR  & .200/.059              & .082/.459              \\
AdaPT-RF  & .200/.078              & .085/.463              \\
AdaPT-XGB & .223/.426              & .075/.461              \\
\textbf{Rule~1} & .093/.382        & .061/.801               \\
\textbf{Rule~2} & .079/.343        & .063/.838              \\
\bottomrule
\end{tabular}
\end{wraptable}
As Table~\ref{tab:realworld} shows, on HIGGS, \texttt{BH} ($0.253$), \texttt{SmoothFDR} ($0.597$), and all three AdaPT variants ($\geq 0.20$) lose FDR control significantly, while \texttt{Two-group} achieves valid FDR only by making no rejections. \textbf{Rule~1} ($0.093$/$0.382$) and \textbf{Rule~2} ($0.079$/$0.343$) are the only methods that simultaneously control at $\alpha = 0.10$ and recover meaningful power. On TCGA DE, where most baselines do control FDR, the AdaPT variants stall at power $\approx 0.46$ and \texttt{BH}/\texttt{Two-group} plateau near $0.66$; Rule~1 and Rule~2 reach a $0.13$--$0.16$ power gain over the strongest FDR-controlling competitor at strictly tighter FDR.


\paragraph{Inference on Unseen Locations.}
To validate generalization to untested locations, we performed a holdout evaluation on the semi-synthetic datasets. We selected 70\% of locations using the density-aware A-optimal strategy ($\gamma=0.1$), observed their p-values, and fit the spatial model to predict $\hat{\alpha}(\text{loc})$ at the remaining 30\%. Figure~\ref{fig:results}(c) compares predictions (y-axis) against the true $\alpha$ values computed from the full dataset (x-axis), with 95\% posterior CI error bars. The strong diagonal clustering demonstrates accurate spatial interpolation, with 94.8\% coverage (close to nominal 95\%), enabling reliable decisions at untested locations.

\section{Conclusions and Limitations}
We introduced a unified framework for regularized FDR control on arbitrarily structured hypothesis spaces, where structure is encoded through reproducing kernel choice, and instantiated it via two decision rules, \emph{Rule~1} and \emph{Rule~2}. Our approach provides the first method to produce continuous prior maps valid across entire domains with convex optimization guarantees, flexible smoothness regularization and principled hyperparameter selection. Both rules deliver substantial power gains over existing baselines while maintaining robust FDR control and generalizing to unobserved locations.

\paragraph{Limitations.} While we believe our framework offers significant contributions, several limitations merit discussion. \textbf{First}, our theoretical guarantees apply to the finite set of observed locations; extending decisions to unobserved points relies on the continuous-domain procedure of Section~\ref{sec:entire_domain} and is subject to the standard caveats of out-of-sample inference. \textbf{Second}, every hypothesis contributes only a \emph{single p-value} from its marginal two-group mixture, creating an irreducible information floor that is most pronounced at isolated locations with few neighbors. This limitation applies to any method of this kind; an advantage of our formulation is that the resulting uncertainty can itself be quantified (Section~\ref{sec:entire_domain}). \textbf{Third}, the framework relies on kernel operations whose cost becomes a bottleneck for very large hypothesis sets; in Supplementary Section~\ref{sec:scalability} we demonstrate that the approach handles up to $500,000$ hypotheses on a single $24$\,GB GPU. Despite these limitations, we believe the substantial power gains demonstrated in our experiments, combined with the theoretical guarantees and computational efficiency of our framework, make it a valuable addition to the spatial FDR toolkit when structural assumptions are appropriate.

\newpage
\section{Societal Impact}
This paper presents work whose goal is to advance the field of machine learning. There are many potential societal consequences of our work, none of which we feel must be specifically highlighted here.

\bibliography{ref}
\bibliographystyle{abbrvnat}


\appendix
\onecolumn

\section{Proof of Marginal Density Independence}
\label{supp:marginal_proof}

\begin{proposition*}[Restated. Marginal Density Independence]
Under the spatially-varying mixture model (Eq.~\ref{eq:mixture_model}), the marginal density of test statistics follows the standard two-group mixture:
\begin{equation}
    f(z) = \bar{\alpha} f_0(z) + (1-\bar{\alpha}) f_1(z)
\end{equation}
where $\bar{\alpha} = \mathbb{E}_{\text{loc}}[\alpha(\text{loc})]$ is the spatial average of the null probability.
\end{proposition*}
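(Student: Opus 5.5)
The plan is to read the marginal density as the density of a test statistic drawn at a location that is itself random, $\text{loc} \sim p(\text{loc})$, the spatial sampling measure already used in Section~\ref{sec:entire_domain} and in the hyperparameter discussion. The hierarchical model is then explicit: first draw $\text{loc} \sim p(\text{loc})$, then draw $Z \mid \text{loc}$ from the conditional density in Eq.~\ref{eq:mixture_model}. Accordingly, $\bar{\alpha}$ is interpreted as $\int_{\mathcal{H}} \alpha(\text{loc})\,p(\text{loc})\,d\text{loc}$. The proof is the law of total probability followed by linearity of the integral, using that $f_0$ and $f_1$ do not depend on $\text{loc}$.

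\textbf{Step 1.} Justify that the joint density $f(z,\text{loc}) = f(z\mid \text{loc})\,p(\text{loc})$ is well defined. This needs $\text{loc} \mapsto \alpha(\text{loc})$ to be measurable, which holds because $\alpha \in \mathcal{H}_K$ with a continuous or measurable kernel, or more generally by the joint measurability assumption of Section~\ref{sec:problem_setup}. It also needs the integrand $(z,\text{loc}) \mapsto f(z \mid \text{loc})\,p(\text{loc})$ to be nonnegative. This holds because $\alpha \in [0,1]$ and $f_0, f_1 \ge 0$. Tonelli's theorem then allows marginalization and any exchange of integrals.

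\textbf{Step 2.} Marginalize:
\begin{equation*}
f(z) = \int_{\mathcal{H}} \bigl[\alpha(\text{loc}) f_0(z) + (1-\alpha(\text{loc})) f_1(z)\bigr] p(\text{loc})\,d\text{loc}.
\end{equation*}
Pull $f_0(z)$ and $f_1(z)$ outside the integral, since they are spatially invariant. Using $\int p = 1$, this gives $f(z) = \bar{\alpha} f_0(z) + (1-\bar{\alpha}) f_1(z)$.

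\textbf{Step 3.} Note that $\bar{\alpha} \in [0,1]$, because it is an average of values in $[0,1]$. The result is therefore a genuine two-group mixture of the classical form of \cite{Efron2004}. It also integrates to one, since $f_0$ and $f_1$ are densities.

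The main obstacle is not computational but interpretive: the paper's data sit at fixed observed locations, so the meaning of ``marginal'' must be fixed. I would handle this by stating the random-location convention explicitly. I would also remark on the empirical version, where $p(\text{loc})$ is the empirical measure on $\{\text{loc}_i\}$. In that case, the pooled density of the $N$ statistics, $\frac1N\sum_i f(z\mid \text{loc}_i)$, equals the mixture with $\bar{\alpha} = \frac1N\sum_i \alpha(\text{loc}_i)$, by the same linearity argument.
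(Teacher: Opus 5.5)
Your proposal is correct and is essentially the paper's proof: treat the locations as draws from $p(\text{loc})$, integrate the conditional mixture over $\mathcal{H}$, pull out the spatially invariant $f_0$ and $f_1$, and use $\int_{\mathcal{H}} p(\text{loc})\,d\text{loc} = 1$ to obtain $1-\bar{\alpha}$ as the second weight. Your measurability/Tonelli justification and the empirical-measure version (with $\bar{\alpha} = \frac{1}{N}\sum_i \alpha(\text{loc}_i)$) are sound additions that the paper leaves implicit.
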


\begin{proof}
Consider the observed locations $\{\text{loc}_i\}_{i=1}^N$ as samples drawn from a spatial sampling distribution with density $p(\text{loc})$ over the domain $\mathcal{H}$. The marginal density of a test statistic $z$, denoted $f(z)$, is obtained by integrating the conditional mixture model (Eq.~\ref{eq:mixture_model} in main text) over the spatial domain:

\begin{align}
    f(z) &= \int_{\mathcal{H}} f(z \mid \text{loc}) \, p(\text{loc}) \, d\text{loc} \\
    &= \int_{\mathcal{H}} \left[ \alpha(\text{loc}) f_0(z) + (1 - \alpha(\text{loc})) f_1(z) \right] p(\text{loc}) \, d\text{loc} \label{eq:supp_marginal_step1}
\end{align}

Since $f_0(z)$ and $f_1(z)$ do not depend on $\text{loc}$ (by our spatially-invariant assumption), we can factor them out of the integral:

\begin{align}
    f(z) &= f_0(z) \int_{\mathcal{H}} \alpha(\text{loc}) \, p(\text{loc}) \, d\text{loc} \nonumber \\
    &\quad + f_1(z) \int_{\mathcal{H}} (1 - \alpha(\text{loc})) \, p(\text{loc}) \, d\text{loc} \label{eq:supp_marginal_step2}
\end{align}

Define the global average null probability as:
\begin{equation}
    \bar{\alpha} = \mathbb{E}_{\text{loc}}[\alpha(\text{loc})] = \int_{\mathcal{H}} \alpha(\text{loc}) \, p(\text{loc}) \, d\text{loc}
    \label{eq:supp_alpha_bar}
\end{equation}

Then the second integral becomes:
\begin{align}
    \int_{\mathcal{H}} (1 - \alpha(\text{loc})) \, p(\text{loc}) \, d\text{loc}
    &= \int_{\mathcal{H}} p(\text{loc}) \, d\text{loc} - \int_{\mathcal{H}} \alpha(\text{loc}) \, p(\text{loc}) \, d\text{loc} \nonumber \\
    &= 1 - \bar{\alpha}
    \label{eq:supp_complement}
\end{align}

Substituting Equations~\eqref{eq:supp_alpha_bar} and \eqref{eq:supp_complement} into Equation~\eqref{eq:supp_marginal_step2}:
\begin{equation}
    f(z) = \bar{\alpha} f_0(z) + (1-\bar{\alpha}) f_1(z)
\end{equation}

This is precisely the standard two-group mixture form used in classical local FDR methods, confirming that the spatial structure affects only the mixing proportion $\bar{\alpha}$, not the functional forms of $f_0$ and $f_1$.
\end{proof}

\section{Proof of FDR Control for Rule 1}
\label{supp:rule1_proof}

\begin{proposition*}[Restated. FDR Control for Rule 1]
Let $S = \{i : \hat\alpha(\text{loc}_i) \le q\}$ be the spatial gate, assumed
independent of the p-values $\{p_i\}_{i\in S}$ used in the second stage of the
rule (achieved by sample-splitting; see remark below). Define
\[
\mathrm{lfdr}_{\mathrm{marg}}(p_i) \;=\; \frac{\bar\alpha\, f_0(p_i)}{\bar\alpha\, f_0(p_i) + (1-\bar\alpha)\, f_1(p_i)}.
\]
The rejection set
\[
R_1 \;=\; \arg\max\Bigl\{\, |R| \;:\; R\subseteq S,\;
       \tfrac{1}{|R|}\sum_{i\in R} \mathrm{lfdr}_{\mathrm{marg}}(p_i)\le q\,\Bigr\}
\]
controls the (Bayesian, marginal) FDR at level $q$ under the two-group mixture, regardless of the quality of $\hat\alpha$.
\end{proposition*}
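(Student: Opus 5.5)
The plan is to condition on the gate $S$ and reduce everything to the classical two-group argument of \citet{san-cai2007}. Because $S$ is independent of $\{p_i\}_{i\in S}$ (by sample-splitting), conditioning on $S$ does not change the joint law of the second-stage p-values. By Proposition~\ref{prop:marginal}, each $p_i$ with $i\in S$ is then marginally distributed as the pooled mixture $\bar\alpha f_0 + (1-\bar\alpha)f_1$.

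Under the two-group model, write $\theta_i\in\{0,1\}$ for the null indicator, with $\theta_i = 0$ for a true null. Bayes' rule gives
\[
\mathrm{lfdr}_{\mathrm{marg}}(p_i) = P(\theta_i = 0 \mid p_i)
\]
in the marginal model. Two points will need care here. First, conditioning on $S$ does not alter this posterior; this again uses the independence assumption. Second, the posterior refers to the marginal prior $\bar\alpha$, not the local $\alpha(\text{loc}_i)$. This is exactly why the guarantee is only the Bayesian/marginal FDR under the pooled two-group mixture, and why it holds ``regardless of the quality of $\hat\alpha$'': $\hat\alpha$ enters only through the choice of the index set $S$.

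The key step is a tower-property computation for a fixed $S$. The set $R_1$ is a measurable function of $\{p_i\}_{i\in S}$. Letting $V=\sum_{i\in R_1}(1-\theta_i)$ count false rejections,
\[
\mathbb{E}[V\mid S,\{p_i\}_{i\in S}] \;=\; \sum_{i\in R_1} P(\theta_i=0\mid p_i) \;=\; \sum_{i\in R_1}\mathrm{lfdr}_{\mathrm{marg}}(p_i) \;\le\; q\,|R_1|.
\]
The inequality is the defining constraint of $R_1$. Dividing by $|R_1|\vee 1$ gives a conditional posterior FDR of at most $q$. Integrating over $\{p_i\}_{i\in S}$ and then over $S$ yields
\[
\mathbb{E}\!\left[\frac{V}{|R_1|\vee 1}\right]\le q.
\]
Summing the same identity before dividing gives $\mathbb{E}[V]\le q\,\mathbb{E}[|R_1|]$, so the mFDR is also at most $q$. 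The argmax structure of $R_1$ (sort the lfdr values ascending and take the largest prefix whose running mean is at most $q$) is irrelevant to validity; only the constraint matters.

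The main obstacle is justifying $P(\theta_i=0\mid p_i, S, \{p_j\}_{j\ne i}) = \mathrm{lfdr}_{\mathrm{marg}}(p_i)$. This requires that, given $S$, the pairs $(\theta_i,p_i)$ for $i\in S$ are independent draws from the pooled two-group model, that $f_0$, $f_1$ and $\bar\alpha$ are the true components (or estimated on an independent split), and that selection into $S$ is not informative about $\theta_i$. If $S$ were informative, the correct posterior would use a gate-specific prior rather than $\bar\alpha$. I would therefore state explicitly that the sample split makes $S$ independent of the second-stage data $(\theta_i,p_i)_{i\in S}$, so that the marginal model applies verbatim inside $S$.
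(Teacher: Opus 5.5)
Your proposal is correct and follows essentially the paper's route: you condition on the p-value-independent gate $S$, identify $\mathrm{lfdr}_{\mathrm{marg}}(p_i)$ with $P(\theta_i=0\mid p_i)$, and use the tower property so that the posterior FDR of $R_1$ equals its average marginal lfdr (at most $q$ by construction), which is the paper's Part~1 on a fixed set combined with Part~2's claim that the gate leaves the posterior unchanged. Your explicit conditions---that selection into $S$ carry no information about the $\theta_i$, and that the pairs $(\theta_i,p_i)$ be conditionally independent across $i$---are what the paper's identity uses implicitly; note only that a sample split by itself does not deliver the former when the gate depends on $\text{loc}_i$ (which informs $\theta_i$ through $\alpha(\text{loc}_i)$), so the conclusion rests on the stated independence assumption rather than on the split.
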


\begin{proof}
The proof has two parts: validity of the average-lfdr step on a fixed candidate set, and preservation of validity under a p-value-independent spatial gate.

\paragraph{Part 1: Validity of average-lfdr thresholding on a fixed set.}
Fix any (deterministic or data-independent random) set $T\subseteq\{1,\dots,N\}$. Under the two-group mixture model, the Bayesian FDR \citep{Efron2004,san-cai2007} of any rejection set $R\subseteq T$ admits the closed-form representation
\begin{equation}
\label{eq:rule1_bfdr_identity}
    \mathbb{E}\!\left[\frac{|R\cap\mathcal{H}_0|}{|R|\vee 1}\,\Big|\,\{p_i\}_{i\in T}\right]
    \;=\; \frac{1}{|R|\vee 1}\sum_{i\in R}\mathrm{lfdr}_{\mathrm{marg}}(p_i),
\end{equation}
where $\mathrm{lfdr}_{\mathrm{marg}}(p_i) = P(H_0\mid p_i)$ under the marginal mixture. By Proposition~\ref{prop:marginal} (Marginal Density Independence), the marginal posterior null probability uses precisely the marginal $\bar\alpha$, $f_0$, $f_1$ that the procedure plugs in --- the spatial structure of $\alpha(\text{loc})$ does not enter $\mathrm{lfdr}_{\mathrm{marg}}$ at all.

The rule "select the largest $R\subseteq T$ such that $\frac{1}{|R|}\sum_{i\in R}\mathrm{lfdr}_{\mathrm{marg}}(p_i)\le q$" is the classical step-up rule of \citet{san-cai2007}, which controls $\mathrm{FDR}(R)\le q$ on $T$ as a direct consequence of identity~\eqref{eq:rule1_bfdr_identity}: the right-hand side is at most $q$ by construction, and it is equal in expectation to the FDP, so $\mathbb{E}[\mathrm{FDP}]\le q$.

\paragraph{Part 2: Spatial gate preserves validity.}
The candidate set $S = \{i : \hat\alpha(\text{loc}_i)\le q\}$ depends on $\hat\alpha$, which is fit on a sample-split portion of the data and is therefore independent of $\{p_i\}_{i\in S}$ used in the average-lfdr step. Conditioning on $S$ does not alter the calibration in~\eqref{eq:rule1_bfdr_identity}: for every $i\in S\cap\mathcal{H}_0$,
\begin{equation}
P(H_0\mid p_i,\, i\in S) \;=\; P(H_0\mid p_i) \;=\; \mathrm{lfdr}_{\mathrm{marg}}(p_i),
\end{equation}
because the event $\{i\in S\}$ is determined by $\hat\alpha(\text{loc}_i)$, which does not depend on $p_i$ in the sample-split design.

Applying Part 1 with $T=S$ yields $\mathrm{FDR}(R_1)\le q$.
\end{proof}

\begin{remark}[On the gate-independence assumption]
The independence between $S$ and $\{p_i\}_{i\in S}$ holds automatically in two natural designs:
(a) \textbf{Sample-splitting:} fit $\hat\alpha$ on a hold-out subset $\mathcal{D}_{\text{train}}$ disjoint from the testing set $\mathcal{D}_{\text{test}}$, and apply the gate only to indices $i\in\mathcal{D}_{\text{test}}$;
(b) $\hat\alpha$ uses only side-information covariates $x_i\neq p_i$ (no p-value information).
In settings without splitting, $\hat\alpha$ depends on the same p-values used for testing; the dependence is mediated by RKHS regularization with stability $\sup_x|\hat\alpha^{(i)}(x)-\hat\alpha(x)| = O(1/(N\lambda_{\text{reg}}))$ (\S\ref{supp:rule2_proof}, Step~2). The resulting inflation of the FDR bound is $O(1/N)$, analogous to the bound for Rule 2.
\end{remark}

\begin{remark}[Source of power gain]
Although the proof argues validity via subset-of-marginal containment, the rule is not merely conservative. Because $S$ is enriched for true alternatives whenever $\hat\alpha$ is informative, the empirical null proportion within $S$ is strictly smaller than $\bar\alpha$. The average-lfdr threshold $\frac{1}{|R|}\sum_{i\in R}\mathrm{lfdr}_{\mathrm{marg}}(p_i)\le q$ is therefore satisfied at a larger rejection budget than if applied to the full set $\{1,\dots,N\}$: more hypotheses can be admitted into $R$ while the running average remains below $q$. This is the source of Rule 1's power gain over the non-spatial procedure.
\end{remark}

\section{Derivation of Natural Gradient for Point-wise Optimization}
\label{supp:natural_gradient}

\subsection{Standard Euclidean Gradient}

The objective function (Eq.~\ref{eq:pointwise_objective} in main text) decomposes into three terms:
\begin{equation}
    \mathcal{L}(\mathbf{c}) = \mathcal{L}_{\text{data}}(\mathbf{c}) + \mathcal{L}_{\text{reg}}(\mathbf{c}) + \mathcal{L}_{\text{bound}}(\mathbf{c})
\end{equation}

We compute the gradient of each term separately.

\subsubsection{Data Term Gradient}

The negative log-likelihood term is:
\begin{equation}
    \mathcal{L}_{\text{data}} = -\sum_{i=1}^N \log\left[\alpha_i f_0(p_i) + (1-\alpha_i) f_1(p_i)\right]
\end{equation}

Using the chain rule with $\alpha_i = (K\mathbf{c})_i = \sum_{j=1}^N K_{ij} c_j$:

\begin{align}
    \frac{\partial \mathcal{L}_{\text{data}}}{\partial c_j}
    &= -\sum_{i=1}^N \frac{1}{\alpha_i f_0(p_i) + (1-\alpha_i) f_1(p_i)} \cdot \frac{\partial}{\partial c_j}\left[\alpha_i f_0(p_i) + (1-\alpha_i) f_1(p_i)\right] \\
    &= -\sum_{i=1}^N \frac{f_0(p_i) - f_1(p_i)}{\alpha_i f_0(p_i) + (1-\alpha_i) f_1(p_i)} \cdot \frac{\partial \alpha_i}{\partial c_j} \\
    &= -\sum_{i=1}^N \frac{f_0(p_i) - f_1(p_i)}{\alpha_i f_0(p_i) + (1-\alpha_i) f_1(p_i)} \cdot K_{ij}
\end{align}

In vector notation, define the residual vector $\mathbf{w} \in \mathbb{R}^N$ with entries:
\begin{equation}
    w_i = -\frac{f_0(p_i) - f_1(p_i)}{\alpha_i f_0(p_i) + (1-\alpha_i) f_1(p_i)}
\end{equation}

Then:
\begin{equation}
    \nabla_{\mathbf{c}} \mathcal{L}_{\text{data}} = K\mathbf{w}
    \label{eq:supp_grad_data}
\end{equation}

\subsubsection{Regularization Term Gradient (Centered)}

The centered RKHS regularization is:
\begin{equation}
    \mathcal{L}_{\text{reg}}
    \;=\; \lambda_{\text{reg}} \|\alpha - \bar\alpha\|^2_{\mathcal{H}_K}
    \;=\; \lambda_{\text{reg}} (\mathbf{c} - \mathbf{c}_{\bar\alpha})^T K (\mathbf{c} - \mathbf{c}_{\bar\alpha}),
\end{equation}
where $\mathbf{c}_{\bar\alpha}\in\mathbb{R}^N$ is the precomputed anchor coefficient defined by $K\mathbf{c}_{\bar\alpha} = \bar\alpha\mathbf{1}$ (solved once via conjugate gradient before optimization begins).

Taking the gradient:
\begin{equation}
    \nabla_{\mathbf{c}} \mathcal{L}_{\text{reg}}
    \;=\; 2\lambda_{\text{reg}} K (\mathbf{c} - \mathbf{c}_{\bar\alpha}).
    \label{eq:supp_grad_reg}
\end{equation}

\subsubsection{Boundary Penalty Gradient}

The boundary penalty is:
\begin{equation}
    \Lambda_{\text{bound}}(\alpha) = \sum_{i=1}^N \left[ \max(0, \alpha_i - 1)^2 + \max(0, -\alpha_i)^2 \right]
\end{equation}

Define the element-wise gradient:
\begin{equation}
    [\nabla_{\alpha} \Lambda_{\text{bound}}]_i = \frac{\partial}{\partial \alpha_i}\left[\max(0, \alpha_i - 1)^2 + \max(0, -\alpha_i)^2\right]
\end{equation}

This has the closed form:
\begin{equation}
    [\nabla_{\alpha} \Lambda_{\text{bound}}]_i = \begin{cases}
        2(\alpha_i - 1) & \text{if } \alpha_i > 1 \\
        2\alpha_i & \text{if } \alpha_i < 0 \\
        0 & \text{otherwise}
    \end{cases}
\end{equation}

Using the chain rule $\frac{\partial \Lambda_{\text{bound}}}{\partial c_j} = \sum_{i=1}^N \frac{\partial \Lambda_{\text{bound}}}{\partial \alpha_i} \frac{\partial \alpha_i}{\partial c_j}$:

\begin{equation}
    \nabla_{\mathbf{c}} \mathcal{L}_{\text{bound}} = \lambda_{\text{bound}} K \nabla_{\alpha} \Lambda_{\text{bound}}
    \label{eq:supp_grad_bound}
\end{equation}

\subsection{Combined Gradient and Factorization}

Combining Equations~\eqref{eq:supp_grad_data}, \eqref{eq:supp_grad_reg}, and \eqref{eq:supp_grad_bound}:

\begin{align}
    \nabla_{\mathbf{c}} \mathcal{L}
    &= K\mathbf{w} + 2\lambda_{\text{reg}} K (\mathbf{c} - \mathbf{c}_{\bar\alpha}) + \lambda_{\text{bound}} K \nabla_{\alpha} \Lambda_{\text{bound}} \\
    &= K \left[ \mathbf{w} + 2\lambda_{\text{reg}} (\mathbf{c} - \mathbf{c}_{\bar\alpha}) + \lambda_{\text{bound}} \nabla_{\alpha} \Lambda_{\text{bound}} \right]
    \label{eq:supp_factored_gradient}
\end{align}

This factorization reveals that the Gram matrix $K$ appears as a leading factor, which is the source of the ill-conditioning in standard gradient descent.

\subsection{Natural Gradient and Kernel Cancellation}

\begin{proof}[Proof of Lemma~\ref{lem:kernel_cancel}]
The natural gradient is defined as:
\begin{equation}
    \tilde{\nabla}_{\alpha} \mathcal{L} = K^{-1} \nabla_{\mathbf{c}} \mathcal{L}
\end{equation}

Substituting the factored form from Equation~\eqref{eq:supp_factored_gradient}:

\begin{align}
    \tilde{\nabla}_{\alpha} \mathcal{L}
    &= K^{-1} \left( K \left[ \mathbf{w} + 2\lambda_{\text{reg}} (\mathbf{c} - \mathbf{c}_{\bar\alpha}) + \lambda_{\text{bound}} \nabla_{\alpha} \Lambda_{\text{bound}} \right] \right) \\
    &= \left(K^{-1} K\right) \left[ \mathbf{w} + 2\lambda_{\text{reg}} (\mathbf{c} - \mathbf{c}_{\bar\alpha}) + \lambda_{\text{bound}} \nabla_{\alpha} \Lambda_{\text{bound}} \right] \\
    &= I \left[ \mathbf{w} + 2\lambda_{\text{reg}} (\mathbf{c} - \mathbf{c}_{\bar\alpha}) + \lambda_{\text{bound}} \nabla_{\alpha} \Lambda_{\text{bound}} \right] \\
    &= \mathbf{w} + 2\lambda_{\text{reg}} (\mathbf{c} - \mathbf{c}_{\bar\alpha}) + \lambda_{\text{bound}} \nabla_{\alpha} \Lambda_{\text{bound}}
\end{align}

where we used $K^{-1}K = I$ (the identity matrix). This completes the proof.
\end{proof}

The complete natural gradient descent update at iteration $k$ is:

\begin{algorithm}[H]
\caption{Natural Gradient Update for Point-wise FDR (centered)}
\begin{algorithmic}[1]
\State \textbf{Pre-compute (one time):} $\mathbf{c}_{\bar\alpha}$ by solving $K\mathbf{c}_{\bar\alpha} = \bar\alpha\mathbf{1}$ via CG. \Comment{$O(N^2 t_{\text{CG}})$}
\State \textbf{Input:} Current coefficients $\mathbf{c}^{(k)}$, learning rate $\eta$
\State Compute $\boldsymbol{\alpha}^{(k)} = K\mathbf{c}^{(k)}$ \Comment{Forward pass: $O(N^2)$}
\State Compute residuals: $w_i^{(k)} = -\frac{f_0(p_i) - f_1(p_i)}{\alpha_i^{(k)} f_0(p_i) + (1-\alpha_i^{(k)}) f_1(p_i)}$ \Comment{$O(N)$}
\State Compute boundary gradients: $[\nabla_{\alpha} \Lambda_{\text{bound}}]_i$ \Comment{$O(N)$}
\State Form natural gradient:
\State \quad $\tilde{\nabla}^{(k)} = \mathbf{w}^{(k)} + 2\lambda_{\text{reg}} (\mathbf{c}^{(k)} - \mathbf{c}_{\bar\alpha}) + \lambda_{\text{bound}} \nabla_{\alpha} \Lambda_{\text{bound}}$ \Comment{$O(N)$}
\State Update: $\mathbf{c}^{(k+1)} = \mathbf{c}^{(k)} - \eta \tilde{\nabla}^{(k)}$ \Comment{$O(N)$}
\State \textbf{Return:} $\mathbf{c}^{(k+1)}$
\end{algorithmic}
\end{algorithm}

The total per-iteration complexity is $O(N^2)$, dominated by the kernel matrix-vector product in the forward pass.

\section{Proof of Hyperparameter-Independent Normalization}
\label{supp:normalization_proof}

Here we provide the complete proof of Proposition~\ref{prop:normalization}.

\begin{proposition*}[Restated. Hyperparameter-Independent Normalization]
\label{prop:normalization}
The joint density $f(\text{loc}, z; \theta)$ integrates to 1 for all hyperparameter values $\theta$:
\begin{equation}
\int_{\mathcal{H}} \int_{z} f(\text{loc}, z; \theta) \, dz \, d\text{loc} = 1, \quad \forall \theta
\end{equation}
Consequently, the conditional density $f(z|\text{loc}; \theta)$ is a valid probability density for all $\theta$, enabling rigorous likelihood-based model selection.
\end{proposition*}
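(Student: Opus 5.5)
The plan is to factor the joint density as $f(\text{loc}, z; \theta) = p(\text{loc})\, f(z \mid \text{loc}; \theta)$, where $p(\text{loc})$ is the spatial sampling density (the same one used in the proof of Proposition~\ref{prop:marginal}). The key point is that $p(\text{loc})$ does not depend on $\theta$: the hyperparameters (kernel type, length-scale, smoothness, $\lambda_{\text{reg}}$) enter only through the fitted mixing function $\alpha(\cdot;\theta)$. By Tonelli's theorem, since all integrands are nonnegative once $\alpha(\text{loc};\theta)\in[0,1]$, I would integrate over $z$ first and over $\text{loc}$ second.

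\textbf{Step 1 (inner integral).} For fixed $\text{loc}$, the conditional density is $f(z\mid\text{loc};\theta) = \alpha(\text{loc};\theta) f_0(z) + (1-\alpha(\text{loc};\theta)) f_1(z)$ by Eq.~\ref{eq:mixture_model}. Since $f_0$ and $f_1$ are probability densities, the inner integral is $\alpha + (1-\alpha) = 1$. This identity holds for every value of $\alpha$, but nonnegativity of the integrand, and hence validity as a density, requires $\alpha(\text{loc};\theta)\in[0,1]$.

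\textbf{Step 2 (outer integral).} The outer integral reduces to $\int_{\mathcal{H}} p(\text{loc})\,d\text{loc} = 1$, independent of $\theta$. This gives the joint normalization. The conditional statement is then exactly Step 1.

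As a supplementary remark, the $z$-marginal is $\bar\alpha f_0 + (1-\bar\alpha) f_1$ by Proposition~\ref{prop:marginal}. Its normalization is therefore also unaffected by $\theta$. The average $\bar\alpha$ is pinned by the marginal fit, and the centering $\mathbf{c}_{\bar\alpha}$ anchors $\alpha$ toward it. Consequently, likelihoods compared across $\theta$ on held-out pairs $(\text{loc}_i,z_i)$ are comparable.

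\textbf{Main obstacle.} The hard part is justifying $\alpha(\text{loc};\theta)\in[0,1]$, which nonnegativity needs. The Stage-1 point-wise estimator only softly penalizes violations, so it does not guarantee this. I would handle it by evaluating the likelihood through the Stage-2 representation $\alpha=\sigma(g)$ with $g\in\mathcal{H}_K$, which lies in $(0,1)$ everywhere by construction. Alternatively, one could apply a clipping map $\alpha\mapsto\min(1,\max(0,\alpha))$, which preserves the argument. I would also make explicit that $f_0$ and $f_1$ are estimated from the marginal alone, without reference to $\theta$. The normalization is then not merely formal: a $\theta$-dependent normalizer would bias the cross-validation comparison, and no such normalizer appears.
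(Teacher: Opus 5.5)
Your proposal is correct and follows essentially the same route as the paper: factor the joint density as $p(\text{loc})\,f(z\mid\text{loc};\theta)$, use $\int f_0 = \int f_1 = 1$ to reduce the inner integral to $\alpha + (1-\alpha) = 1$, and then use $\int_{\mathcal{H}} p(\text{loc})\,d\text{loc} = 1$. The one difference is that the paper simply assumes $\alpha:\mathcal{H}\times\Theta\to[0,1]$, whereas you point out that the Stage-1 soft penalty does not guarantee this and offer a remedy (clipping, or the Stage-2 form $\alpha=\sigma(g)$); this tightens the argument without changing its route.
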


\begin{proof}[Proof of Proposition~\ref{prop:normalization}]

Let $\mathcal{H}$ denote the spatial domain and $[z_{\min}, z_{\max}]$ denote the support of the test statistics. We must show that the normalization constant:
\begin{equation}
Z(\theta) = \int_{\mathcal{H}} \int_{z_{\min}}^{z_{\max}} f(\text{loc}, z; \theta) \, dz \, d\text{loc}
\label{eq:supp_normalization_constant}
\end{equation}
equals 1 for all hyperparameter values $\theta$.

By the factorization $f(\text{loc}, z; \theta) = p(\text{loc}) \cdot f(z|\text{loc}; \theta)$ and the mixture model (Eq.~\ref{eq:mixture_model} in main text):
\begin{align}
Z(\theta) &= \int_{\mathcal{H}} \int_{z_{\min}}^{z_{\max}} p(\text{loc}) \cdot f(z|\text{loc}; \theta) \, dz \, d\text{loc} \nonumber \\
&= \int_{\mathcal{H}} p(\text{loc}) \int_{z_{\min}}^{z_{\max}} \left[ \alpha(\text{loc}; \theta)f_0(z) + (1-\alpha(\text{loc}; \theta))f_1(z) \right] dz \, d\text{loc}
\label{eq:supp_step1}
\end{align}

Since $f_0(z)$ and $f_1(z)$ do not depend on $\text{loc}$ (by assumption), we can factor them out of the inner integral:
\begin{align}
Z(\theta) &= \int_{\mathcal{H}} p(\text{loc}) \left[ \alpha(\text{loc}; \theta) \underbrace{\int_{z_{\min}}^{z_{\max}} f_0(z) \, dz}_{\text{Term A}} \right. \nonumber \\
&\qquad\qquad\qquad \left. + (1-\alpha(\text{loc}; \theta)) \underbrace{\int_{z_{\min}}^{z_{\max}} f_1(z) \, dz}_{\text{Term B}} \right] d\text{loc}
\label{eq:supp_step2}
\end{align}

Since $f_0$ and $f_1$ are valid probability density functions over $[z_{\min}, z_{\max}]$, by definition they integrate to 1:
\begin{align}
\text{Term A:} \quad &\int_{z_{\min}}^{z_{\max}} f_0(z) \, dz = 1 \\
\text{Term B:} \quad &\int_{z_{\min}}^{z_{\max}} f_1(z) \, dz = 1
\end{align}

Substituting into Equation~\eqref{eq:supp_step2}:
\begin{align}
Z(\theta) &= \int_{\mathcal{H}} p(\text{loc}) \left[ \alpha(\text{loc}; \theta) \cdot 1 + (1-\alpha(\text{loc}; \theta)) \cdot 1 \right] d\text{loc} \nonumber \\
&= \int_{\mathcal{H}} p(\text{loc}) \left[ \alpha(\text{loc}; \theta) + 1 - \alpha(\text{loc}; \theta) \right] d\text{loc} \nonumber \\
&= \int_{\mathcal{H}} p(\text{loc}) \, d\text{loc}
\label{eq:supp_step3}
\end{align}

Since $p(\text{loc})$ is itself a probability density function over the spatial domain $\mathcal{H}$, by definition:
\begin{equation}
\int_{\mathcal{H}} p(\text{loc}) \, d\text{loc} = 1
\label{eq:supp_step4}
\end{equation}

Finally, combining Equations~\eqref{eq:supp_step3} and \eqref{eq:supp_step4}:
\begin{equation}
Z(\theta) = 1, \quad \forall \theta
\end{equation}

Crucially, this derivation makes \textbf{no reference to the specific functional form of $\alpha(\text{loc}; \theta)$} beyond the requirement that $\alpha: \mathcal{H} \times \Theta \to [0,1]$.
This completes the proof.
\end{proof}

\subsection{Implications for Cross-Validation}

Proposition~\ref{prop:normalization} has several important consequences for hyperparameter selection:

\paragraph{Valid likelihood comparisons.}
Since $f(\text{loc}, z; \theta)$ is normalized for all $\theta$, the conditional density satisfies:
\begin{equation}
\int_{z} f(z|\text{loc}; \theta) \, dz = \frac{\int_{z} f(\text{loc}, z; \theta) \, dz}{p(\text{loc})} = \frac{p(\text{loc})}{p(\text{loc})} = 1
\end{equation}

This means test-set log-likelihoods $\sum_{i \in \text{Test}} \log f(z_i|\text{loc}_i; \theta)$ are directly comparable across different $\theta$ values without any normalization corrections.

The normalization constraint prevents pathological solutions where $\alpha(\text{loc}; \theta)$ is tuned to artificially inflate the marginal density $f(z)$ at the expense of spatial coherence. The mixture weights $\alpha(\text{loc}; \theta)$ must satisfy:
\begin{equation}
\int_{\mathcal{H}} \alpha(\text{loc}; \theta) \, p(\text{loc}) \, d\text{loc} = \bar{\alpha}
\end{equation}
where $\bar{\alpha}$ is determined by the marginal data, not by $\theta$.

\paragraph{Consistency with classical FDR.}
Because the regularizer is centered at $\bar\alpha$, the limit
$\lambda_{\text{reg}}\to\infty$ (infinite smoothing) drives
$\alpha(\text{loc};\theta)\equiv\bar\alpha$. In this limit the model reduces
exactly to the classical non-spatial two-group mixture, and the
cross-validation criterion selects the marginal null probability $\bar\alpha$
that maximizes the marginal likelihood, exactly as in standard local FDR
methods \citep{Efron2004}. (Without centering, the limit would instead drive
$\alpha\to 0$, breaking this consistency.)

\section{Proof of Asymptotic FDR Control for Rule 2}
\label{supp:rule2_proof}

\begin{theorem}[Asymptotic FDR Control; restated]
\label{thm:rule2_fdr}
Under the spatially varying mixture model (Eq.~\ref{eq:mixture_model}) with
mirror-conservative null p-values \citep{barber2020robust} (satisfied in
particular when $f_0 = 1$) and $\hat{\alpha}$ obtained from the centered RKHS
penalized likelihood with $\lambda_{\mathrm{reg}} > 0$:
\begin{equation}
    \mathrm{FDR}(\mathcal{R}) \leq \alpha + O\!\left(\tfrac{1}{N \lambda_{\mathrm{reg}}}\right).
\end{equation}
In particular, for any fixed $\lambda_{\mathrm{reg}} > 0$ the bound
reduces to $\alpha + O(1/N)$.
\end{theorem}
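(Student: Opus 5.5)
We follow the leave-one-out (LOO) version of the Barber--Cand\`es argument in \citet{barber2020robust}, replacing exact invariance of the scores under a null flip by an approximate invariance that comes from the RKHS regularization.

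\textbf{Step 1: exact control for an oracle.} For each null index $i$, let $\hat\alpha^{(i)}$ be the estimator refit with $p_i$ replaced by the symmetric pair $\{p_i,1-p_i\}$, so that it is invariant to whether $p_i$ or $1-p_i$ was observed. One concrete choice is to drop the $i$-th likelihood term. Conditionally on $\{p_j\}_{j\neq i}$, the score pair $(T_i^{(i)},\tilde T_i^{(i)})$ built from $\hat\alpha^{(i)}$ is then a fixed map applied to $\{p_i,1-p_i\}$. Mirror-conservativeness, which holds e.g.\ when $f_0=1$, makes $\mathbf{1}\{T_i\le \tilde T_i\}$ stochastically dominated by a fair coin given $\{\min,\max\}(p_i,1-p_i)$. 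The standard supermartingale / optional-stopping argument then gives
\[
\mathbb{E}\Bigl[\frac{\mathbf{1}\{T_i^{(i)}\le \hat t^{(i)}\}}{1+\#\{j:\tilde T_j\le \hat t^{(i)}\}}\Bigr]\le \text{(null share)} .
\]
Summing over nulls, this yields $\mathrm{FDR}\le\alpha$ for the LOO-scored procedure. This is exactly the ``$N$ leave-one-out fits'' variant mentioned in the main text.

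\textbf{Step 2: stability of $\hat\alpha$.} We show $\sup_x|\hat\alpha(x)-\hat\alpha^{(i)}(x)|\le C/(N\lambda_{\mathrm{reg}})$. The natural route is the uniform-stability argument for Tikhonov-regularized ERM in an RKHS (Bousquet--Elisseeff). The objective in Eq.~\eqref{eq:pointwise_objective} is $2\lambda_{\mathrm{reg}}$-strongly convex in $\|\cdot\|_{\mathcal{H}_K}$ after normalizing the data term by $N$; equivalently, we read $\lambda_{\mathrm{reg}}$ as scaling with $N$. The per-point loss $-\log h_i$ is Lipschitz in $\alpha(\mathrm{loc}_i)$ with constant $\max_i|f_0-f_1|/\min h_i$, provided $f_0,f_1$ are bounded above and away from zero on the relevant range and $\alpha$ stays in $[0,1]$ up to the boundary penalty. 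The boundary penalty is convex and common to both problems, so it does not hurt the argument. Comparing the first-order conditions of the two strongly convex problems gives $\|\hat\alpha-\hat\alpha^{(i)}\|_{\mathcal{H}_K}\le L\sqrt{\kappa}/(N\lambda_{\mathrm{reg}})$, where $\kappa=\sup K(x,x)$. The reproducing property then gives the sup-norm bound.

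\textbf{Step 3: transfer to the actual procedure.} $T_i$ and $\tilde T_i$ are smooth in $\hat\alpha(\mathrm{loc}_i)$ with bounded derivative under the same density bounds. Hence every score moves by at most $\delta=O(1/(N\lambda_{\mathrm{reg}}))$ when $\hat\alpha^{(i)}$ replaces $\hat\alpha$. The FDR difference between the actual and LOO procedures comes only from hypotheses whose score lies within $\delta$ of the data-driven threshold, or whose ordering relative to its mirror flips. This is the main obstacle, because $\hat t$ is discontinuous in the scores. We would handle it by sandwiching: compare the actual procedure with LOO procedures run at inflated and deflated thresholds $t\pm\delta$. We then bound the expected number of scores in a $\delta$-window by $N\delta\cdot\sup(\text{score density})$, which requires a bounded-density assumption on the $T_i$. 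Dividing by $\mathbb{E}[R]\gtrsim N$ (or using the $1\vee R$ denominator together with a nontrivial rejection count) turns the $O(N\delta)$ window mass into an additive $O(\delta)=O(1/(N\lambda_{\mathrm{reg}}))$ term. Combining this with Step 1 gives $\mathrm{FDR}(\mathcal R)\le\alpha+O(1/(N\lambda_{\mathrm{reg}}))$, and for fixed $\lambda_{\mathrm{reg}}$ the bound is $\alpha+O(1/N)$.
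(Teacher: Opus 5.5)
Your three steps are the paper's three steps: exact Barber--Cand\`es control for a surrogate that is blind to the sign of $p_i$, Bousquet--Elisseeff stability giving $\sup_x|\hat\alpha(x)-\hat\alpha^{(i)}(x)|=O(1/(N\lambda_{\mathrm{reg}}))$, and Lipschitz transfer to the scores. Two of your choices improve on the paper:
\begin{itemize}
\item Dropping the $i$-th term gives a surrogate that really is invariant under $p_i\leftrightarrow 1-p_i$. The paper's $\hat\alpha^{(i)}$ (a refit with $p_i$ replaced by $1-p_i$) is not.
\item You correctly flag that the $1/N$ rate needs the data term normalized by $N$, which the main-text objective is not.
\end{itemize}

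The gap is in Step 1: you have not shown that the LOO-scored procedure has exact FDR control. Flipping a null $p_i$ swaps $(T_i^{(i)},\tilde T_i^{(i)})$. It also moves every $(T_j^{(j)},\tilde T_j^{(j)})$ with $j\neq i$, because $\hat\alpha^{(j)}$ still contains the $i$-th likelihood term. The optional-stopping argument conditions on all magnitudes $\min(T_j,\tilde T_j)$, and its leave-one-out form conditions on all other scores. Both of these now carry information about the sign of $p_i$, so the null signs are no longer conditionally fair coins.

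Your per-$i$ display is valid only if, for that $i$, all $N$ scores and the threshold are built from $\hat\alpha^{(i)}$. But then the $N$ displays concern $N$ different score vectors. Summing them does not bound the FDR of any single procedure, because the final counting step (summing indicator-over-count ratios across nulls to get at most $1$) needs one common score vector. So there is no exactly controlled oracle for Step 3 to compare against. The cost of replacing $\hat\alpha^{(i)}$ by $\hat\alpha$ has to be paid inside each per-null term. The paper makes the same leap when it asserts exact coin-flip symmetry for $\{T_i^{(i)},\tilde T_i^{(i)}\}_{i\in\mathcal{H}_0}$, and again in its remark on $N$ leave-one-out fits.

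Step 3 then needs hypotheses that are not in the theorem. You need a bounded score density near a data-dependent threshold, uniformly in $t$. You also need $R\ge cN$ with high probability. Dividing the window mass by $\mathbb{E}[R]$ does not bound $\mathbb{E}[V/(R\vee 1)]$, and if $R=O(1)$ the window mass $N\delta=O(1/\lambda_{\mathrm{reg}})$ gives no $1/N$ rate at all.

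Even with these assumptions, the $t\pm\delta$ sandwich does not go through automatically:
\begin{itemize}
\item The threshold computed from the actual scores is not a stopping time for any filtration under which null signs are fair, since every actual score depends slightly on every $p_j$.
\item FDP is not monotone in $t$. Trapping $\hat t$ between two comparison thresholds controls $V$ and $R$ separately but not their ratio.
\item Those two thresholds can be far apart when the step-up ratio hovers near $\alpha$, because an $O(1)$ change in counts can then move $\hat t$ a long way.
\end{itemize}

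The paper's Step 3 simply asserts an $O(1/N)$ inflation, so you have located the real obstacle more candidly than the paper does. Still, as written neither argument proves the bound under the stated assumptions. Closing the gap requires one of two things. The first is explicit regularity conditions of the kind above together with a genuine per-null comparison argument. The second is an estimator that sees each $p_j$ only through $\min(p_j,1-p_j)$ (AdaPT-style masking); that restores the exact flip-sign property, but it changes the procedure the theorem is about.
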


The proof proceeds in three steps. We first establish coin-flip symmetry
under the hypothetical scenario where $\hat{\alpha}$ does not depend on a
given null's p-value, then bound the deviation from this scenario via
stability of the regularized estimator, and finally propagate the bound
to the FDR.

\paragraph{Step 1: Coin-flip symmetry under exact invariance.}
Let $\mathcal{H}_0 \subseteq \{1, \dots, N\}$ denote the indices of true
nulls. By the mirror-conservative property, for each $i \in \mathcal{H}_0$
the joint distribution of the data is invariant under the swap
$p_i \leftrightarrow 1 - p_i$, conditional on $\{x_j\}_{j=1}^N$ and
$\{p_j\}_{j \neq i}$. If $\hat{\alpha}$ were a fixed function of inputs
that did not depend on $p_i$, then $T_i$ and $\tilde{T}_i$ would inherit
this exchangeability, since each is obtained by plugging $p_i$ or
$1-p_i$ into a common formula whose remaining arguments are unaffected
by the swap. This is the canonical setup of \citet{barber2020robust},
under which the step-up rule (Eq.~\ref{eq:threshold_rule}) controls
FDR exactly at level $\alpha$.

\paragraph{Step 2: Stability of $\hat{\alpha}$ under single-coordinate flips.}
The RKHS estimator solves
\begin{equation}
    \hat{\alpha} = \arg\min_{\alpha \in \mathcal{H}_K} \; -\frac{1}{N} \sum_{i=1}^N \log \ell_i\!\left(\alpha(x_i); p_i\right) + \lambda_{\text{reg}} \, \|\alpha - \bar{\alpha}\|_{\mathcal{H}_K}^2,
\end{equation}
where $\ell_i$ denotes the per-observation mixture likelihood. Let
$\hat{\alpha}^{(i)}$ denote the estimator obtained when $p_i$ is replaced
by $1 - p_i$ and all other observations are held fixed.

The objective is convex in $\alpha$ with modulus at least
$2\lambda_{\text{reg}}$ in the RKHS norm. Under the standing assumption
that $f_0$ and $f_1$ are bounded away from $0$ and $\infty$ on the
relevant range, the per-observation log-likelihood is uniformly bounded,
so changing a single observation perturbs the empirical loss term by
at most $C/N$. Standard stability arguments for regularized
M-estimators \citep{bousquet2002stability} then yield
\begin{equation}
    \|\hat{\alpha}^{(i)} - \hat{\alpha}\|_{\mathcal{H}_K} \leq \frac{C'}{N \lambda_{\text{reg}}},
\end{equation}
and consequently, by the reproducing property and bounded kernel,
\begin{equation}
    \sup_{x} \, \bigl|\hat{\alpha}^{(i)}(x) - \hat{\alpha}(x)\bigr| \leq \frac{C''}{N \lambda_{\text{reg}}}.
\end{equation}

\paragraph{Step 3: Propagation to FDR.}
The score map $(p, a) \mapsto a f_0(p) / [a f_0(p) + (1-a) f_1(p)]$ is
Lipschitz in $a$ on compact subsets, so there exists $L$ with
\begin{equation}
    \bigl|T_i - T_i^{(i)}\bigr|,\; \bigl|\tilde{T}_i - \tilde{T}_i^{(i)}\bigr| \leq \frac{L \cdot C''}{N \lambda_{\text{reg}}},
\end{equation}
where $T_i^{(i)}, \tilde{T}_i^{(i)}$ denote the scores constructed using
$\hat{\alpha}^{(i)}$ in place of $\hat{\alpha}$, i.e., the scores produced
by the hypothetical estimator that does not see $p_i$. By Step 1, the
collection $\{T_i^{(i)}, \tilde{T}_i^{(i)}\}_{i \in \mathcal{H}_0}$
satisfies exact coin-flip symmetry, so the Barber--Cand\`{e}s
supermartingale argument applied to these hypothetical scores delivers
exact FDR control at level $\alpha$. The actual scores deviate from
the hypothetical ones by $O(1/N \lambda_{\text{reg}})$ uniformly, which
inflates the resulting FDR bound by $O(1/N)$ when $\lambda_{\text{reg}}$
is held fixed. Combining:
\begin{equation}
    \mathrm{FDR}(\mathcal{R}) \leq \alpha + O(1/N).
\end{equation}
\hfill$\square$

\paragraph{Remark on misspecification.}
The argument relies only on the mirror-conservative property of null
p-values and on the stability of the regularized estimator. Consequently,
FDR control persists \emph{under arbitrarily wrong spatial information}:
a misspecified kernel, an irrelevant or adversarial covariate, or a
graph that does not reflect any true dependence cannot break the symmetry
of null p-values under the swap $p_i\leftrightarrow 1-p_i$. The role of
correct specification is to maximize \emph{power}, since under correct
specification $T_i$ approaches the true local fdr, which is the optimal
test statistic \citep{san-cai2007}.

\paragraph{Remark on the $\lambda_{\mathrm{reg}}$ dependence.}
The $1/(N\lambda_{\mathrm{reg}})$ slack arises from approximate, rather
than exact, leave-one-out invariance: stability of the regularized
estimator quantifies the perturbation in $\hat{\alpha}$ when a single
null's p-value is flipped, but the perturbation does not vanish.
Replacing $\hat{\alpha}$ with its leave-one-out counterpart
$\hat{\alpha}^{(i)}$ in the construction of $(T_i, \tilde{T}_i)$
recovers exact coin-flip symmetry under $H_0$ with $f_0 = 1$, removing
the $\lambda_{\mathrm{reg}}$ dependence and yielding
$\mathrm{FDR}(\mathcal{R}) \leq \alpha + O(1/N)$ unconditionally. The
cost is $N$ leave-one-out fits, which can be amortized via warm-starting
from $\hat{\alpha}$. We adopt the single-fit construction throughout for
computational simplicity and because, in the regimes considered, the
$\lambda_{\mathrm{reg}}$ chosen by likelihood cross-validation is bounded
away from zero.

\section{Proof of Power Dominance under Correct Specification}
\label{supp:power_proof}

\begin{theorem}[Optimality of Spatial lfdr]
\label{thm:spatial_dominance}
Assume correct specification, $\hat\alpha(\text{loc}) = \alpha^{*}(\text{loc})$, and that $\alpha^{*}(\text{loc})$ is non-constant. Then thresholding $\mathrm{lfdr}_{\mathrm{spatial}}(p_i,\text{loc}_i)$ achieves minimal $\mathrm{mFNR}$ among all decision rules controlling $\mathrm{mFDR}$ at level $\alpha$. In particular, thresholding $\mathrm{lfdr}_{\mathrm{marg}}(p_i)$ achieves strictly higher $\mathrm{mFNR}$ at the same level.
\end{theorem}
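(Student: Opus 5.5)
The approach is to reduce the statement to the compound-decision framework of \citet{san-cai2007}, applied to the joint model in which each hypothesis carries the pair $(p_i,\text{loc}_i)$. Under correct specification $\hat\alpha=\alpha^*$, the quantity $\mathrm{lfdr}_{\mathrm{spatial}}(p_i,\text{loc}_i)=\alpha^*(\text{loc}_i)f_0(p_i)/f(p_i\mid\text{loc}_i)$ is exactly the oracle posterior $P(\theta_i=0\mid p_i,\text{loc}_i)$ by Bayes' rule applied to Eq.~\ref{eq:mixture_model}. The first step records this identity.

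The second step is a Neyman--Pearson argument for mFDR/mFNR. Any (possibly randomized) decision rule $\delta(p,\text{loc})\in[0,1]$ has $\mathbb{E}[V]=\mathbb{E}[\delta\,L]$ and $\mathbb{E}[R]=\mathbb{E}[\delta]$, where $L=\mathrm{lfdr}_{\mathrm{spatial}}$. It also has $\mathbb{E}[T]=\mathbb{E}[(1-\delta)(1-L)]$. The constraint $\mathrm{mFDR}\le\alpha$ then reads $\mathbb{E}[\delta(L-\alpha)]\le 0$. The goal of minimizing mFNR is treated, as in \citet{san-cai2007}, by fixing $\mathbb{E}[R]$ and maximizing $\mathbb{E}[\delta(1-L)]$, equivalently minimizing $\mathbb{E}[\delta L]$. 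A Lagrangian/rearrangement argument then shows that the optimum is attained by $\delta^*=\mathbf 1\{L\le t\}$, with $t$ chosen so that the mFDR constraint is tight. To get optimality over all $\mathbb{E}[R]$, I would invoke the monotonicity results of \citet{san-cai2007}: mFDR of the threshold rule is increasing in $t$ and mFNR is decreasing. These imply that the largest admissible threshold is optimal. The marginal rule $\mathbf 1\{\mathrm{lfdr}_{\mathrm{marg}}(p)\le t'\}$ lies in the admissible class, which gives weak dominance.

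The third step, which I expect to be the main obstacle, is \emph{strictness}. Equality in the rearrangement inequality holds only if the marginal rule coincides almost surely with a level set of $L$ (up to null sets where $L=t$). I would argue by contradiction. Since $\alpha^*$ is non-constant, there exist location sets $A,B$ of positive $p(\text{loc})$-mass with $\alpha^*<a<b<\alpha^*$ on $A$ and on $B$ respectively. For a fixed $p$, the spatial lfdr $L(p,\text{loc})$ is strictly increasing in $\alpha^*(\text{loc})$ whenever $f_0(p),f_1(p)>0$. Hence near the boundary $\{L=t\}$ there is a positive-measure set of $(p,\text{loc})$ with $\text{loc}\in A$ rejected and $\text{loc}\in B$ not rejected at the same $p$, or the reverse. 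The marginal rule cannot do this, because it depends on $p$ alone. This needs a mild regularity assumption so that the boundary is non-degenerate: $f_1/f_0$ is continuous and strictly monotone, so that $L$ has no atoms and the level set is not trivial (i.e., $0<P(L\le t)<1$). I would state this assumption explicitly.

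Given that the marginal rule differs from $\delta^*$ on a set of positive measure where $L\neq t$, the exchange argument yields a strict gap. Moving rejections from the high-$L$ region to the low-$L$ region at fixed $\mathbb{E}[R]$ strictly reduces $\mathbb{E}[\delta L]$, which frees mFDR budget and lets the threshold increase, so mFNR strictly decreases. Combining this with the monotonicity from the second step gives the strict inequality in mFNR at the common level $\alpha$.
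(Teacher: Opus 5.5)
Your proof is correct. It has the same skeleton as the paper's: the posterior identity, optimality of thresholding the oracle posterior, the marginal rule as a feasible competitor, and strictness from non-constancy of $\alpha^*$. Both substantive steps, however, are argued differently.

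\textbf{Optimality step.} The paper passes through the weighted classification risk $L_\lambda$. It gets the Bayes rule $I[\mathrm{lfdr}_{\mathrm{spatial}}<1/\lambda]$ pointwise from the decomposition of posterior risk. It then transfers this to mFDR/mFNR at $\lambda^*(\alpha)$ via Sun--Cai's equivalence plus a one-line appeal to ``admissibility of Bayes rules.'' You instead fix $\mathbb{E}[R]$, solve $\min \mathbb{E}[\delta L]$ by rearrangement, and use monotonicity of the threshold family. This handles the ratio form of mFDR and mFNR directly and makes explicit the transfer step the paper compresses.

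The paper's route does cover compound rules $\delta_i(x_1,\dots,x_N)$ automatically, and you should add this. Under independence $P(\theta_i=0\mid x_1,\dots,x_N)=L(x_i)$, so $\mathbb{E}[V]=\sum_i\mathbb{E}[\delta_i L_i]$. The pointwise bound $\delta_i(L_i-t)\ge\min(0,L_i-t)$ then extends your argument verbatim.

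\textbf{Strictness step.} Yours is the more rigorous argument. The paper only asserts that equal mFNR forces identical rankings. Its level-set reasoning says nothing for continuous p-values, where $\{j:p_j=p_i\}$ is a.s.\ a singleton. What is actually needed is that the two rejection regions differ on a set of positive probability where $L\neq t$, which your exchange argument supplies.

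Your extra hypotheses are genuinely needed, not artifacts. If $f_1/f_0$ takes only two values, a two-valued $\alpha^*$ can make the optimal spatial rejection region equal to $\{f_1/f_0=\ell_{\max}\}$, which is a marginal rule, so the mFNRs coincide. Likewise, at levels where only the empty rule is feasible, the two mFNRs coincide.

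Two small tightenings. First, choose $A,B$ relative to the threshold rather than a priori. If the fixed-$p$ rejection region $\{\text{loc}:\alpha^*(\text{loc})\le g(p)\}$ had $p(\text{loc})$-mass $0$ or $1$ for a.e.\ $p$, then continuity and monotonicity of $g$, together with $0<P(L\le t)<1$, would force $\alpha^*$ to be a.s.\ constant. Second, drop ``or the reverse'': $L$ is increasing in $\alpha^*$ at fixed $p$, so that case cannot occur.
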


We adopt the compound decision framework of \citet{san-cai2007}, with $x_i = (p_i,\text{loc}_i)$ as the full observation for hypothesis $i$. The proof has five steps.

\paragraph{Step 1: Posterior under correct specification.}
Under the spatially varying mixture model, $\theta_i\sim\mathrm{Bernoulli}(1-\alpha^{*}(\text{loc}_i))$ independently and $p_i\mid\theta_i\sim f_{\theta_i}$, so the posterior null probability given the full observation is
\begin{equation}
    P(\theta_i = 0 \mid p_i,\text{loc}_i)
    \;=\; \frac{\alpha^{*}(\text{loc}_i)\, f_0(p_i)}
              {\alpha^{*}(\text{loc}_i)\, f_0(p_i) + (1-\alpha^{*}(\text{loc}_i))\, f_1(p_i)}
    \;=\; \mathrm{lfdr}_{\mathrm{spatial}}(p_i,\text{loc}_i).
\end{equation}

\paragraph{Step 2: Bayes-optimal decision rule.}
By Theorem~2 of \citet{san-cai2007}, the Bayes-optimal rule minimizing the weighted classification risk
\begin{equation}
    L_{\lambda}(\theta,\delta) \;=\; \frac{1}{N}\sum_{i=1}^N \bigl[\lambda\, I(\theta_i=0)\,\delta_i \;+\; I(\theta_i=1)(1-\delta_i)\bigr]
\end{equation}
is the test
\begin{equation}
    \delta_i^{*}
    \;=\; I\bigl[\,P(\theta_i=0\mid p_i,\text{loc}_i) < 1/\lambda\,\bigr]
    \;=\; I\bigl[\,\mathrm{lfdr}_{\mathrm{spatial}}(p_i,\text{loc}_i) < 1/\lambda\,\bigr].
\end{equation}
This minimizes $\mathbb{E}[L_{\lambda}]$ over \emph{all} decision rules (not just monotone or simple ones): the posterior risk decomposes as a sum over $i$, and each term is minimized pointwise by the Bayes test
\begin{equation}
    \delta_i^{*} \;=\; I\bigl[\,\lambda P(\theta_i=0\mid x_i) < P(\theta_i=1\mid x_i)\,\bigr],
\end{equation}
which reduces to the displayed form upon dividing both sides by $P(\theta_i=1\mid x_i)$.

\paragraph{Step 3: $\mathrm{mFDR}$/$\mathrm{mFNR}$ equivalence.}
By Theorem~1 of \citet{san-cai2007}, for any $\mathrm{mFDR}$ level $\alpha$ there exists a unique $\lambda^{*}(\alpha)$ such that $\delta^{*}(\lambda^{*}(\alpha))$ controls $\mathrm{mFDR}$ at exactly $\alpha$ with the smallest $\mathrm{mFNR}$ among all decision rules. This is established by showing that, along the family $\{\delta^{*}(\lambda)\}_{\lambda>0}$ of Bayes rules, $\mathrm{mFDR}$ is monotone increasing in $1/\lambda$ and continuous, so by the intermediate value theorem there is a unique threshold achieving any given $\mathrm{mFDR}\in(0,1)$. Admissibility of Bayes rules then gives that no other rule (Bayes or otherwise) can achieve strictly lower $\mathrm{mFNR}$ at the same $\mathrm{mFDR}$.

\paragraph{Step 4: Coarsening cost of the marginal rule.}
The marginal rule $\delta_i^{\mathrm{marg}} = I[\mathrm{lfdr}_{\mathrm{marg}}(p_i) < c]$ is a particular decision rule that ignores $\text{loc}_i$. For the value $c$ that yields $\mathrm{mFDR} = \alpha$, the corresponding $\mathrm{mFNR}$ satisfies
\begin{equation}
    \mathrm{mFNR}_{\mathrm{marg}} \;\ge\; \mathrm{mFNR}_{\mathrm{spatial}}
\end{equation}
by the optimality of $\delta^{*}$ established in Step 3.

\paragraph{Step 5: Strict inequality when $\alpha^{*}$ is non-constant.}
Equality in Step 4 holds only if $\mathrm{lfdr}_{\mathrm{spatial}}$ and $\mathrm{lfdr}_{\mathrm{marg}}$ induce the same ranking of hypotheses. Because $\mathrm{lfdr}_{\mathrm{marg}}$ depends only on $p_i$, it is constant on the level sets $\{j : p_j = p_i\}$, while $\mathrm{lfdr}_{\mathrm{spatial}}(p_i,\text{loc}_i)$ varies with $\text{loc}_i$ whenever $\alpha^{*}$ varies. Therefore the two rankings agree only if $\alpha^{*}(\text{loc})$ is constant in $\text{loc}$. Whenever $\alpha^{*}$ varies non-trivially, the inequality is strict. \hfill$\square$

\section{Marginal Density Estimation}
\label{supp:calc_fo_f1}

\paragraph{Motivation.}
The estimation of the marginal component densities, $f_0$ and $f_1$, typically follows one of two paradigms, each with distinct limitations. The first approach assumes the inputs are well-calibrated p-values, enforcing a strict Uniform distribution for the null hypothesis ($f_0 \sim U[0,1]$). While theoretically sound, this direction fails fundamentally when the input data are $z$-scores or other test statistics, and the calibration itself is rare in practice \cite{Efron2004}.

The second approach operates on $z$-values (or generic continuous data) and relies on general mixture models to separate the null and alternative distributions based on their shape. One might assume this direction could handle p-values by simply treating them as bounded data points. However, a nuanced problem arises: when the null hypothesis is truly Uniform, it manifests as a highly "overdispersed" or maximal-entropy background relative to the signal. Standard clustering or separation algorithms, which typically expect compact modes for both classes, fail to identify the signal against this flat, featureless background. This dilemma, where p-value methods cannot handle $z$-scores, and $z$-score methods fail on p-values due to null overdispersion, motivated many hybrid strategies, which we did not dive into here.

For the evaluations in this paper, we employ a simple hybrid strategy which respect the theoretical properties of the p-value domain to define the null, while utilizing the $z$-score domain to characterize the alternative signal.
First, we strictly enforce the theoretical null hypothesis, setting $f_0(p) = 1$ for all $p \in [0,1]$. This avoids the estimation instability caused by the overdispersed null. Second, to estimate the alternative density $f_1$, we transform the p-values into probit space ($z$-scores) via the inverse standard normal CDF, $z_i = \Phi^{-1}(1 - p_i)$. In this space, the alternative distribution is well-approximated by a Gaussian. We isolate the "signal" tail by selecting observations with $p_i < 0.2$ and estimate the alternative parameters $(\mu_1, \sigma_1)$ using the sample moments of these tail $z$-scores:
\begin{equation}
\hat{\mu}_1 = \text{mean}(z \mid p < 0.2), \quad \hat{\sigma}_1 = \text{std}(z \mid p < 0.2)
\end{equation}
The alternative density is then defined in $z$-space as $\mathcal{N}(\hat{\mu}_1, \hat{\sigma}_1)$ and transformed back to p-value space using the appropriate Jacobian:
\begin{equation}
f_1(p) = \frac{\phi(z_p \mid \hat{\mu}_1, \hat{\sigma}_1)}{\phi(z_p \mid 0, 1)}, \quad \text{where } z_p = \Phi^{-1}(1-p)
\end{equation}
This procedure yields a robust marginal model that combines the stability of the theoretical null with the flexibility of a parametric alternative fit.

\section{Kernel Selection Principles}
\label{supp:pick_kernel}

Having established both point-wise and entire-domain solution approaches, we now address the question of kernel choice. The reproducing kernel $K(\cdot, \cdot)$ serves a dual role: it defines the smoothness of $\alpha(x)$ through the RKHS norm $\|\alpha\|^2_{\mathcal{H}_K} = c^T K c$ (or, in centered form, $\|\alpha-\bar\alpha\|^2_{\mathcal{H}_K} = (\mathbf{c}-\mathbf{c}_{\bar\alpha})^T K (\mathbf{c}-\mathbf{c}_{\bar\alpha})$), and it determines the interpolation behavior between observed locations- essential for predictions at new points in the entire-domain setting. A fundamental requirement for any symmetric function $K: \mathbb{R}^d \times \mathbb{R}^d \to \mathbb{R}$ to serve as a reproducing kernel is positive definiteness. For our specific objective function with a logarithmic loss term $\log(\alpha(x_i))$, a slightly stronger condition is necessary. Since we employ natural gradient optimization, which requires computing $K^{-1}\nabla_{\mathbf{c}}\mathcal{L}$, the Gram matrix $K$ must be strictly positive definite ($K \succ 0$) and therefore invertible. This also ensures uniqueness of the solution in coefficient space, which is important given the logarithm's barrier-like behavior near zero. Most standard kernels (Gaussian, Matérn) satisfy this property when evaluated on distinct points.

While positive definiteness guarantees a valid RKHS, the \emph{differentiability} of the kernel determines the smoothness properties of functions in that space. This relationship is formalized through Sobolev space theory. A Sobolev space $W^{m,2}(\mathbb{R}^d)$ contains functions whose derivatives up to order $m$ are square-integrable, where $m$ quantifies smoothness. An RKHS can often be identified with a specific Sobolev space, with the kernel's differentiability determining $m$. A key result is the Sobolev embedding theorem is that if $m > d/2$, then functions in $W^{m,2}(\mathbb{R}^d)$ are guaranteed to be continuous and bounded.

The smoothness requirements depend on the application. If the goal is solely to estimate $\alpha(\text{loc}_i)$ at observed locations, kernel differentiability is not strictly required- the optimization machinery and Representer Theorem only require kernel evaluation at data points. However, when interpolating to new locations in the entire-domain setting, kernel differentiability becomes essential. By choosing a smooth kernel satisfying $m > d/2$, we impose the prior belief that $\alpha(x)$ varies continuously across space, ensuring the function behaves predictably between observed points. For example, the linear kernel $K(\mathbf{x}, \mathbf{y}) = \mathbf{x}^T \mathbf{y}$ is a valid choice for point-wise estimation, where it effectively uses the correlation between location vectors as a similarity measure. However, for entire-domain generalization, the linear kernel is constrained by its limited smoothness properties.
Given our need for a flexible, configurable kernel suitable for both point-wise and entire-domain settings, in our evaluations we employ the \textbf{Matérn kernel family} which includes a smoothness parameter $\nu > 0$ that explicitly controls the differentiability. To satisfy the Sobolev embedding condition for $d$-dimensional spaces, we require $\nu > d/2$.

\subsection{Graph Kernels: Bridging Discrete and Continuous FDR}

A significant conceptual advance of our framework is the connection it establishes between smooth FDR methods and graph kernel theory. While we focused on continuous spatial domains $\mathcal{H} \subset \mathbb{R}^d$, many multiple testing problems arise on discrete structures: gene regulatory networks, phylogenetic trees, gene ontology hierarchies, and social networks. Indeed, most current smooth FDR methods operate on such discrete graph representations, requiring topology-specific algorithmic solutions where methods designed for trees cannot be applied to scale-free networks, and techniques for hierarchical structures fail on general graphs.
Our RKHS framework provides a unified solution: by choosing an appropriate graph kernel, the same optimization algorithm applies to arbitrary discrete topologies.

For a comprehensive treatment of graph kernels, we refer the reader to the seminal work of Kondor and Lafferty (2002) and its subsequent extensions by Smola and Kondor (2003). Within this framework, the \textit{Graph Laplacian kernel} ($K = L^{\dagger}$) serves as a canonical example. Its associated RKHS norm, $\|\alpha\|^2_{\mathcal{H}} = \boldsymbol{\alpha}^T L \boldsymbol{\alpha} = \sum_{(i,j) \in E} W_{ij}(\alpha_i - \alpha_j)^2$, explicitly penalizes signal differences across edges using an $L_2$ metric. This formulation contrasts fundamentally with the Total Variation (TV) penalty ($\sum W_{ij}|\alpha_i - \alpha_j|$) employed by current graph-based FDR methods, whereas TV relies on the $L_1$ norm to enforce piecewise-constant clustering, the Laplacian kernel promotes smooth variation across the network. To capture longer-range dependencies, this approach naturally extends to \textit{diffusion kernels} ($K = \exp(-\beta L)$) and \textit{random walk kernels}, which incorporate global graph topology and allow distant but well-connected nodes to influence local estimation.
A particularly promising direction involves the use of \textbf{Hyperbolic Embeddings for Hierarchical Structures}, as many biological and social networks exhibit inherent hierarchical organization : gene ontologies (GO), phylogenetic trees, and protein interaction networks with hub-spoke patterns. For such structures, the seminal result of \cite{Sarkar2012} regarding hyperbolic spaces applies. Briefly, Sarkar's Theorem states that any tree with $n$ nodes can be embedded into the 2-dimensional hyperbolic space (Poincaré disk $\mathbb{H}^2$) with arbitrarily low distortion $(1+\epsilon)$ for any $\epsilon > 0$.This dimension efficiency is crucial for practical application, but more fundamentally, it addresses a geometric incompatibility. Consider gene ontology enrichment testing containing 50,000 terms, but observations at only 500 locations ($1\%$ sampling). In Euclidean space, such hierarchical structures face an intrinsic "capacity" problem: trees exhibit exponential volume growth, whereas Euclidean space has only polynomial growth. Consequently, embedding a tree into \emph{any} low-dimensional Euclidean space $\mathbb{R}^d$ must incur significant distortion. Even high-dimensional Euclidean embeddings fail to capture the tree topology faithfully compared to the hyperbolic plane. \textbf{In contrast}, by leveraging Sarkar's construction, we can embed these dependencies into just 2 dimensions with near-perfect fidelity, effectively bypassing the limitations of Euclidean kernels for hierarchical data.

\section{Alternative Approaches for Enforcing $\alpha(x) \in [0,1]$ Over Continuous Domains}
\label{supp:alternative_methods}

Before developing the convex two-step kernel logistic regression framework, we explored several direct methods for enforcing $\alpha(\text{loc}) \in [0,1]$ across the entire continuous domain $\mathcal{H}$. This section documents these approaches, their theoretical foundations, and the computational barriers that led us to pursue the convex alternative. We present this analysis to contextualize our methodological choices and guide future research.

\subsection{Approach 1: Direct Squashing and Non-Convexity}
A common solution for ensuring global constraint satisfaction is to apply a squashing function $\sigma: \mathbb{R} \to [0,1]$ to an unconstrained function, for example, the logistic function $\sigma(z) = 1/(1+e^{-z})$. The fundamental issue is that the composition of the logarithm with the mixture under squashing renders the objective non-convex. To see this explicitly, consider the data term for a single observation:
\begin{equation}
\ell(g_i) = -\log\left[\sigma(g_i) f_0(p_i) + (1-\sigma(g_i)) f_1(p_i)\right]
\label{eq:direct_squashing}
\end{equation}
The second derivative with respect to the latent value $g_i$ is:
\begin{align}
\frac{\partial^2 \ell}{\partial g_i^2} &= \frac{\partial}{\partial g_i}\left[\frac{-\sigma'(g_i)(f_0(p_i) - f_1(p_i))}{\sigma(g_i)f_0(p_i) + (1-\sigma(g_i))f_1(p_i)}\right] \\
&= \frac{-\sigma''(g_i)(f_0 - f_1)(\sigma f_0 + (1-\sigma)f_1) + (\sigma'(g_i))^2(f_0 - f_1)^2}{[\sigma(g_i)f_0(p_i) + (1-\sigma(g_i))f_1(p_i)]^2}
\end{align}

For the logistic function, the second derivative $\sigma''(z) = \sigma(z)(1-\sigma(z))(1-2\sigma(z))$ changes sign depending on whether $\sigma(z)$ is above or below $0.5$. This means the loss function is neither convex nor concave in $g_i$, even for a single data point. The global objective, being a sum over $N$ such terms plus a convex regularizer, inherits this non-convexity.

The non-convex landscape leads to multiple stationary points, many of which are local minima. Different initializations of $\mathbf{c}$ can converge to qualitatively different solutions with vastly different objective values. Moreover, the objective surface often exhibits regions where the gradient is near-zero but the Hessian has both positive and negative eigenvalues, causing optimization algorithms to stall. In our experiments, we found that the direct squashing approach failed to converge within reasonable iteration budgets, oscillating between different regions of parameter space. Even when optimization converged, the resulting $\alpha(\text{loc})$ functions often exhibited pathological behavior such as rapid oscillations between extreme values (near 0 and 1) in regions with sparse data.

\subsection{Approach 2: The Semi-Infinite Programming (SIP) Formulation}
Enforcing $\alpha(x) \in [0,1]$ for all $x \in \mathcal{H}$ is a \textbf{semi-infinite programming (SIP)} problem \citep{Hettich1993}, an optimization with finite decision variables but infinitely many constraints:
\begin{equation}
\begin{aligned}
\min_{\mathbf{c} \in \mathbb{R}^N} \quad & -\sum_{i=1}^N \log\left[\alpha(x_i) f_0(p_i) + (1-\alpha(x_i)) f_1(p_i)\right] + \lambda \|\alpha-\bar\alpha\|^2_{\mathcal{H}_K} \\
\text{s.t.} \quad & 0 \leq \alpha(x) \leq 1, \quad \forall x \in \mathcal{H}
\end{aligned}
\label{eq:sip}
\end{equation}
where $\alpha(x) = \sum_{j=1}^N c_j K(x, x_j)$.

Classical SIP methods include: (1) discretization with adaptive refinement, (2) exchange methods that iteratively add violated constraints, (3) reduction to finite equivalent constraints via problem structure, and (4) barrier/penalty methods with sampling-based approximation. Here, each method corresponds to a classical SIP approach: local reduction uses KKT finite reduction \citep{still2001generalized}, polynomial SDP applies moment relaxations \citep{henrion2009convex}, and the barrier method employs interior point penalties \citep{royset2013optimal}. However, SIP theory assumes convex objectives and our non-convex mixture likelihood eliminates convergence guarantees, and constraint violation patterns depend on kernel choice in complex ways.

\paragraph{Method 1: Local Reduction via Critical Points.} Extreme constraint violations can only occur at critical points in the interior or at domain boundaries. For upper boundary violations ($\alpha(x) > 1$), critical points satisfy:
\begin{align}
\alpha(x) &= \sum_{j=1}^N c_j K(x, x_j) = 1 \label{eq:boundary}\\
\nabla_x \alpha(x) &= \sum_{j=1}^N c_j \nabla_x K(x, x_j) = 0 \label{eq:gradient_constraint}\\
\nabla^2_{xx} \alpha(x) &\preceq 0 \label{eq:hessian_constraint}
\end{align}

For Gaussian RBF kernels, this yields a system of $d+1$ nonlinear equations in $d$ unknowns, solvable via Newton-Raphson when well-conditioned. By Bézout's theorem, a system of $d$ polynomial equations of degree $\mathcal{O}(d)$ admits $\mathcal{O}(2^d d^d)$ real solutions. Cost per critical point: $\mathcal{O}(d^3)$ per Newton iteration gives a total cost of:
\begin{equation}
\text{Cost}_{\text{local}} = \mathcal{O}(2^d d^{d+3} N)
\end{equation}
So, while under compactness of $\mathcal{H}$ and non-degeneracy of $K$, all constraint-violating critical points are identified in finite time, the solution is intractable for large d due to exponential scaling and ill-conditioning.
Moreover, Jacobian conditioning deteriorates as $\kappa(J) \sim \mathcal{O}(N^{d/(d+2)})$, causing numerical instability. In practice, we tried few common solvers, all weren't able to solve the set of equations fro $d \geq 10$.

\paragraph{Method 2: Polynomial SDP Relaxation.} Approximate the non-polynomial mixture likelihood via Taylor expansion, then apply semidefinite programming relaxations for global certificates. Expand $\log(h_i(\alpha))$ around $\alpha = 0.5$ with $\beta_i = \alpha(x_i) - 0.5$ and mixing ratio $r_i = (f_0(p_i) - f_1(p_i))/m_i$:
\begin{equation}
\log(h_i) \approx \log(m_i) + \beta_i r_i - \frac{(\beta_i r_i)^2}{2} + \frac{(\beta_i r_i)^3}{3} - \frac{(\beta_i r_i)^4}{4}
\end{equation}

Substituting $\beta_i = \sum_j c_j K(x_i, x_j) - 0.5$ yields a degree-4 polynomial in $\mathbf{c}$. The Lasserre SDP hierarchy constructs moment matrices $M_k(y)$ of dimension $\binom{N+k}{k}$ that provide increasingly tight convex relaxations.

For bounded densities with $\max_i |f_0(p_i)|, |f_1(p_i)| \leq B$, Taylor error is $\mathcal{O}(|\beta_i|^5)$. As relaxation order $k \to \infty$, SDP converges to the global polynomial optimum. Nevertheless, the computational memory requirements for moment matrix:
\begin{equation}
\text{Memory} = 8 \cdot \frac{\binom{N+k}{k}(\binom{N+k}{k}+1)}{2} \text{ bytes}
\end{equation}

Concrete limits:
\begin{align*}
N=50, k=2: &\quad 7.03 \text{ MB} \\
N=100, k=2: &\quad 106 \text{ MB} \\
N=50, k=4: &\quad 400 \text{ GB}
\end{align*}

Making this solution non-practical.

\subsection{Approach 3: Barrier Method with Tail-Aware Sampling}
Finally, here we describe the barrier method for which we compare against in the evaluations section.
Transform the semi-infinite constraint set into a penalized objective using logarithmic barrier functions, then approximate the resulting domain integral via importance sampling that concentrates samples in regions most likely to violate constraints. The classical barrier method for constrained optimization replaces hard constraints with smooth penalty terms that approach infinity at the boundary. For the semi-infinite programming problem~\eqref{eq:sip} we reformulate as:
\begin{equation}
\mathcal{L}_{\nu}(\mathbf{c}) = -\sum_{i=1}^N \log(h_i(\alpha(x_i))) + \lambda \|\alpha-\bar\alpha\|^2_{\mathcal{H}_K} - \frac{1}{\nu} \int_{\mathcal{H}} \left[\log(\alpha(x)) + \log(1-\alpha(x))\right] dx
\end{equation}
where $\nu > 0$ is the barrier parameter controlling penalty strength and the integral is over the Lebesgue measure on $\mathcal{H} \subseteq \mathbb{R}^d$. The barrier terms $-\log(\alpha(x))$ and $-\log(1-\alpha(x))$ create increasingly steep penalties as $\alpha(x)$ approaches 0 or 1, respectively. As $\nu \to \infty$, these penalties force $\alpha(x)$ to remain strictly within $(0,1)$ throughout the domain. The method proceeds by solving a sequence of problems with increasing $\nu$:
\begin{equation}
\nu_0 = 1, \quad \nu_{k+1} = \beta \nu_k, \quad \beta \in [1.1, 1.5]
\end{equation}
Starting from small $\nu_0$ (weak constraints) allows easier optimization, while gradually increasing $\nu$ tightens constraints. Each iteration warm-starts from the previous solution.
The barrier objective contains the integral:
\begin{equation}
I(\mathbf{c}) = \int_{\mathcal{H}} \left[\log(\alpha(x)) + \log(1-\alpha(x))\right] dx
\end{equation}
This integral has \textbf{no closed form for general kernels and domains}, with direct numerical quadrature (e.g., Gaussian quadrature grids) becomes intractable in large d.

\subsubsection{Monte Carlo Approximation via Importance Sampling.}
We approximate the integral using Monte Carlo integration with $M$ samples $\{z_m\}_{m=1}^M$ drawn from a proposal distribution $q(x)$:
\begin{equation}
I(\mathbf{c}) = \int_{\mathcal{H}} \frac{f(x)}{q(x)} q(x) \, dx \approx \frac{1}{M} \sum_{m=1}^M \frac{f(z_m)}{q(z_m)}
\end{equation}
where $f(x) = \log(\alpha(x)) + \log(1-\alpha(x))$ and $q(x)$ is the sampling density. The key question is: \textit{what distribution $q$ minimizes variance and ensures constraint violation detection?}
For that, we construct a kernel density estimate from the $N$ observed data locations:
\begin{equation}
\hat{p}(x) = \frac{1}{N} \sum_{i=1}^N K_{\text{KDE}}(x, x_i)
\end{equation}
where $K_{\text{KDE}}(x, y) = \frac{1}{h^d} K_0\left(\frac{x-y}{h}\right)$ is a probability kernel (typically Gaussian) with bandwidth parameter $h > 0$. The bandwidth $h$ critically determines sampling quality. We employ two standard methods:
\textit{1. Scott's Rule}:
\begin{equation}
h_{\text{Scott}} = \left(\frac{4}{d+2}\right)^{\frac{1}{d+4}} N^{-\frac{1}{d+4}} \hat{\sigma}
\end{equation}
where $\hat{\sigma}$ is the empirical standard deviation of the data locations (computed per-dimension and averaged).
\textit{2. Cross-Validation} (optimal, expensive):
\begin{equation}
h_{\text{CV}} = \arg\min_h \frac{1}{N} \sum_{i=1}^N \left(\hat{p}_{-i}(x_i; h) - \delta(x_i)\right)^2
\end{equation}
where $\hat{p}_{-i}(x; h)$ is the leave-one-out KDE excluding point $x_i$, and $\delta$ is the Dirac delta. In practice, this minimizes integrated squared error via grid search over candidate $h$ values.

\textbf{The Critical Observation is} that naive sampling from $\hat{p}(x)$ concentrates samples where data is dense. However, constraint violations $\alpha(x) \notin [0,1]$ may occur most violently in \textit{tail regions} far from training data, where RKHS extrapolation becomes unreliable. Sampling from $\hat{p}(x)$ thus \textit{misses exactly the regions we need to monitor}. To address this, we design a hybrid distribution that balances three competing objectives:
\begin{equation}
p_{\text{hybrid}}(x) = \rho_1 \hat{p}(x) + \rho_2 q_{\text{tail}}(x) + \rho_3 u(x)
\end{equation}
where:
\textbf{Data-Dense Zone} ($\rho_1 = 0.3$) are sampled from KDE $\hat{p}(x)$. These regions contribute most to the data-fit term in $\mathcal{L}_{\nu}$.
\textbf{Tail Zone} ($\rho_2 = 0.5$) are samples from inverse density:
\begin{equation}
q_{\text{tail}}(x) = \frac{w(x)}{\int_{\mathcal{H}} w(y) dy}, \quad w(x) = \frac{1}{\hat{p}(x) + \epsilon}
\end{equation}
where $\epsilon > 0$ (typically $\epsilon = 10^{-6}$) prevents division by zero. These are points where violent violations occur, for which we use rejection sampling with acceptance probability $\propto (\hat{p}(x) + \epsilon)^{-1}$.
\textbf{Uniform Zone} ($\rho_3 = 0.2$) are sampled uniformly:
\begin{equation}
u(x) = \frac{1}{\text{Vol}(\mathcal{H})}
\end{equation}
Finally, the choice $(\rho_1, \rho_2, \rho_3) = (0.3, 0.5, 0.2)$ comes from hyperparameter optimization.

\subsubsection{Practical Limitations}
Despite strong theoretical foundations and favorable asymptotic complexity, the barrier method proved impractical in our experiments:

\begin{enumerate}
\item \textbf{Sampling Sensitivity:} The mixing weights $(\rho_1, \rho_2, \rho_3)$ require problem-specific tuning. Too much tail allocation ($\rho_2 > 0.6$) introduces excessive gradient noise; too little ($\rho_2 < 0.3$) misses violations.

\item \textbf{Bandwidth Selection:} KDE bandwidth $h$ critically affects tail sampling quality. Scott's rule often oversmooths for $d \geq 5$; cross-validation is expensive.

\item \textbf{Barrier Schedule:} The growth rate $\beta$ and starting value $\nu_0$ require careful tuning. Aggressive schedules ($\beta > 1.5$) cause convergence failure; conservative schedules ($\beta < 1.1$) waste computation.

\item \textbf{Gradient Variance:} Even with $M = 10^4$ samples, Monte Carlo variance in the barrier gradient necessitates small learning rates ($\eta \sim 10^{-3}$), requiring hundreds of iterations per barrier level.

\item \textbf{Non-Convexity Persists:} The method still optimizes the non-convex mixture likelihood, inheriting all local minima issues. Multiple random initializations are required, multiplying computational cost.

\item \textbf{Numerical Instability:} Near constraint boundaries ($\alpha \approx 0$ or $\alpha \approx 1$), the terms $1/\alpha$ and $1/(1-\alpha)$ become numerically unstable, requiring careful regularization ($\alpha \gets \text{clip}(\alpha, 10^{-8}, 1-10^{-8})$).
\end{enumerate}

\section{Practical Cross-Validation Procedure}
\label{supp:cv_procedure}

In our experiments (Section~\ref{sec:experiments} in main text), we employ the following procedure:

\begin{algorithm}[H]
\caption{Hyperparameter Selection via Cross-Validation}
\label{alg:supp_cv}
\begin{algorithmic}[1]
\Require Data $\{(p_i, \text{loc}_i)\}_{i=1}^N$, hyperparameter grid $\Theta$, number of folds $K$
\Ensure Optimal hyperparameters $\theta^*$
\State Partition data into $K$ folds: $\mathcal{D} = \mathcal{D}_1 \cup \cdots \cup \mathcal{D}_K$
\For{each $\theta \in \Theta$}
    \State Initialize cumulative log-likelihood: $\mathcal{L}_{\text{CV}}(\theta) = 0$
    \For{fold $k = 1, \ldots, K$}
        \State Train on $\mathcal{D}_{\text{train}}^{(k)} = \mathcal{D} \setminus \mathcal{D}_k$
        \State Obtain coefficients $\mathbf{c}^{(k)}(\theta)$ by solving Eq.~\ref{eq:pointwise_objective}
        \State Compute test log-likelihood:
        \State \quad $\mathcal{L}_k(\theta) = \sum_{i \in \mathcal{D}_k} \log f(z_i|\text{loc}_i; \theta, \mathbf{c}^{(k)})$
        \State Update: $\mathcal{L}_{\text{CV}}(\theta) \gets \mathcal{L}_{\text{CV}}(\theta) + \mathcal{L}_k(\theta)$
    \EndFor
\EndFor
\State \Return $\theta^* = \arg\max_{\theta \in \Theta} \mathcal{L}_{\text{CV}}(\theta)$
\end{algorithmic}
\end{algorithm}

\section{Posterior Variance, Fisher Information, and Prior Variance}
\label{supp:posterior_variance_section}

\paragraph{Symbol convention.}
To avoid collision with the mixture density $h_i = \alpha_i f_0(p_i) + (1-\alpha_i)f_1(p_i)$
of \S\ref{supp:natural_gradient}, we denote the per-observation Fisher information
by $\mathcal{I}_i$ throughout this section.

\subsection{Fisher Information of the Mixture}
\label{supp:fisher_connection}

\begin{proposition}[Fisher Information]
\label{prop:fisher_info}
For the mixture $f(p\mid\alpha) = \alpha f_0(p) + (1-\alpha) f_1(p)$, the observed Fisher information is:
\begin{equation}
    \mathcal{I}(\alpha;p)
    \;=\; \left(\frac{\partial \log f(p\mid\alpha)}{\partial\alpha}\right)^{\!2}
    \;=\; \frac{[f_0(p)-f_1(p)]^2}{[\alpha f_0(p)+(1-\alpha) f_1(p)]^2}.
\end{equation}
At observation $i$, we write $\mathcal{I}_i := \mathcal{I}(\alpha_i;p_i)$.
\end{proposition}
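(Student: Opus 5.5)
The plan is to compute the score of the single-observation log-likelihood in $\alpha$ directly and then square it. The mixture $f(p\mid\alpha)=\alpha f_0(p)+(1-\alpha)f_1(p)$ is affine in $\alpha$, so its derivative is the constant (in $\alpha$) quantity $\partial_\alpha f(p\mid\alpha)=f_0(p)-f_1(p)$. By the chain rule for the logarithm, the score is
\begin{equation*}
s(\alpha;p)=\frac{\partial \log f(p\mid\alpha)}{\partial\alpha}=\frac{f_0(p)-f_1(p)}{\alpha f_0(p)+(1-\alpha)f_1(p)}.
\end{equation*}
This is exactly $-w_i$ from Proposition~\ref{lem:kernel_cancel} evaluated at $(\alpha_i,p_i)$, which gives a useful consistency check with the natural-gradient residuals. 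Squaring yields the displayed expression for $\mathcal{I}(\alpha;p)$, and specializing to $(\alpha_i,p_i)$ gives $\mathcal{I}_i$.

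The one point needing care is reconciling the definition. The statement calls $\mathcal{I}$ the \emph{observed} Fisher information but writes it as the squared score, whereas observed information is usually defined as the negative second derivative $-\partial_\alpha^2\log f$. I would show that the two coincide for this model. Because $f$ is affine in $\alpha$, we have $\partial_\alpha^2 f=0$, and therefore
\begin{equation*}
-\partial_\alpha^2\log f=-\frac{\partial_\alpha^2 f}{f}+\Big(\frac{\partial_\alpha f}{f}\Big)^2=\Big(\frac{f_0-f_1}{f}\Big)^2.
\end{equation*}
So the negative Hessian of the log-likelihood equals the squared score identically, pointwise in $p$, and both readings of the definition give the same formula. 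This also matches the curvature of the data term of \eqref{eq:pointwise_objective} in $\alpha_i$, which is what the later Laplace approximation uses.

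The main obstacle is only this definitional subtlety. The algebra itself is routine, provided $f(p\mid\alpha)>0$ so that the logarithm is differentiable. This positivity holds for $\alpha\in[0,1]$ under the standing assumption, used in the proof of Theorem~\ref{thm:rule2_fdr}, that $f_0$ and $f_1$ are bounded away from $0$. I would note that assumption explicitly at the start of the argument.
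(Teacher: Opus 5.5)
Your proposal is correct and follows the paper's proof, which simply differentiates $\log f(p\mid\alpha)$ in $\alpha$ and squares the result. Your extra step is not in the paper's proof but is correct: since $f$ is affine in $\alpha$, $-\partial_\alpha^2 \log f$ equals the squared score, and this is exactly what justifies the paper's later use of $\mathcal{I}_i$ in the data-term Hessian $K^T \mathrm{diag}(\mathcal{I}_i) K$.
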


\begin{proof}
Direct computation:
\begin{equation}
\frac{\partial \log f}{\partial \alpha}
 \;=\; \frac{f_0(p) - f_1(p)}{\alpha f_0(p) + (1-\alpha) f_1(p)}
 \;\;\Longrightarrow\;\;
\mathcal{I}(\alpha;p)
 \;=\; \left(\frac{f_0(p) - f_1(p)}{\alpha f_0(p) + (1-\alpha) f_1(p)}\right)^{\!2}.
\end{equation}
\end{proof}

\begin{remark}[Fisher Information Behavior]
The Fisher information $\mathcal{I}_i$ exhibits characteristic behavior as a function of $\alpha$:
\begin{itemize}
    \item When $\alpha \approx 0.5$ and $f_0(p) \approx f_1(p)$: $\mathcal{I}_i \approx 0$ (low information, high statistical uncertainty).
    \item When $\alpha \approx 0$ or $\alpha \approx 1$: One distribution dominates, yielding larger $\mathcal{I}_i$ (more information, lower statistical uncertainty).
\end{itemize}
This creates the "Fisher floor" phenomenon: even with perfect spatial coverage, locations with $\alpha \approx 0.5$ have inherently high posterior uncertainty due to the mixture model ambiguity.
\end{remark}

\subsection{Hessian of the Centered Objective}

The Hessian of $\mathcal{L}(\mathbf{c})$ at $\hat{\mathbf{c}}$ is the second derivative of the three terms.

\paragraph{Data term.}
From $\mathcal{L}_{\text{data}} = -\sum_i \log h_i$ with $h_i = \alpha_i f_0(p_i)+(1-\alpha_i) f_1(p_i)$ and $\alpha_i = (K\mathbf{c})_i$, the second derivative gives
\begin{equation}
\nabla^2 \mathcal{L}_{\text{data}}
\;=\; K^T \mathrm{diag}(\mathcal{I}_i)\, K,
\end{equation}
where $\mathcal{I}_i$ is the per-observation Fisher information of Proposition~\ref{prop:fisher_info}.

\paragraph{Centered regularization term.}
For $\mathcal{L}_{\text{reg}} = \lambda_{\text{reg}}(\mathbf{c}-\mathbf{c}_{\bar\alpha})^T K (\mathbf{c}-\mathbf{c}_{\bar\alpha})$,
\begin{equation}
\nabla^2 \mathcal{L}_{\text{reg}} \;=\; 2\lambda_{\text{reg}} K.
\end{equation}
(Centering is a linear shift in $\mathbf{c}$ and does not affect curvature.)

\paragraph{Boundary term.}
With $\mathbf{b}_i = 2\cdot I(\alpha_i\notin[0,1])$,
\begin{equation}
\nabla^2 \mathcal{L}_{\text{bound}} \;=\; \lambda_{\text{bound}}\, K^T \mathrm{diag}(\mathbf{b})\, K.
\end{equation}

\paragraph{Total Hessian.}
\begin{equation}
\label{eq:hessian_centered}
H \;=\; K^T \mathrm{diag}(\mathcal{I}_i)\, K
   \;+\; 2\lambda_{\text{reg}}\, K
   \;+\; \lambda_{\text{bound}}\, K^T \mathrm{diag}(\mathbf{b})\, K.
\end{equation}

\subsection{Posterior Variance Theorem}
\label{supp:posterior_variance_theorem}

\begin{theorem}[Posterior Predictive Variance]
\label{thm:posterior_variance_full}
Under the Laplace approximation $\mathbf{c} \mid \text{data} \sim \mathcal{N}(\hat{\mathbf{c}}, H^{-1})$ with $H$ given by~\eqref{eq:hessian_centered}, the posterior variance of $\alpha(\text{loc})$ is:
\begin{equation}
\mathrm{Var}[\alpha(\text{loc}) \mid \text{data}] = \mathbf{k}(\text{loc})^T H^{-1} \mathbf{k}(\text{loc}),
\end{equation}
where $\mathbf{k}(\text{loc}) = [K(\text{loc}, \text{loc}_{s_1}), \ldots, K(\text{loc}, \text{loc}_{s_{N_0}})]^T$.
\end{theorem}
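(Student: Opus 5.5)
The statement is a direct consequence of the Laplace approximation together with the linearity of $\alpha$ in the coefficient vector, so the plan is a Gaussian linear-functional computation. First we fix the parameterization. On the tested design $\{\text{loc}_{s_1},\dots,\text{loc}_{s_{N_0}}\}$, the Representer Theorem (invoked in Section~\ref{sec:pointwise}) gives $\alpha(\text{loc}) = \sum_{j=1}^{N_0} c_j K(\text{loc},\text{loc}_{s_j}) = \mathbf{k}(\text{loc})^T\mathbf{c}$. Here $\mathbf{k}(\text{loc})$ is deterministic: it depends only on the kernel and the design, not on the p-values. Hence $\alpha(\text{loc})$ is a fixed linear functional of $\mathbf{c}$.

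Second, we justify the Gaussian approximation itself. We expand $\mathcal{L}(\mathbf{c})$ to second order around the minimizer $\hat{\mathbf{c}}$. The gradient vanishes there: by Proposition~\ref{lem:kernel_cancel}, $\nabla_{\mathbf{c}}\mathcal{L} = K\tilde\nabla$, and $K\succ 0$ (Supplementary Section~\ref{supp:pick_kernel}). Interpreting $\exp(-\mathcal{L})$ as an unnormalized posterior, with the centered RKHS penalty acting as a Gaussian prior on $\mathbf{c}$, gives $\mathbf{c}\mid\text{data}\approx\mathcal{N}(\hat{\mathbf{c}},H^{-1})$. The curvature matrix $H$ is the Hessian in~\eqref{eq:hessian_centered}, whose three terms were computed above. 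We must check that $H\succ 0$ so that $H^{-1}$ exists. The data and boundary terms are positive semidefinite, since $\mathcal{I}_i\ge 0$ and $\mathbf{b}_i\ge 0$, and the term $2\lambda_{\text{reg}}K$ is positive definite for $\lambda_{\text{reg}}>0$. One caveat: the data term uses observed Fisher information, i.e.\ the outer-product form $K\,\mathrm{diag}(\mathcal{I}_i)\,K$. The exact second derivative of $-\log h_i$ in $\alpha_i$ is $(f_0-f_1)^2/h_i^2$, which coincides with $\mathcal{I}_i$, so this identification is exact for the mixture. The boundary term is piecewise quadratic, so its Hessian holds almost everywhere.

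Third, we push the Gaussian through the linear map. For a deterministic vector $\mathbf{a}$ and $\mathbf{c}\sim\mathcal{N}(\hat{\mathbf{c}},\Sigma)$, we have $\mathrm{Var}[\mathbf{a}^T\mathbf{c}] = \mathbf{a}^T\Sigma\mathbf{a}$. Taking $\mathbf{a}=\mathbf{k}(\text{loc})$ and $\Sigma=H^{-1}$ gives $\mathrm{Var}[\alpha(\text{loc})\mid\text{data}] = \mathbf{k}(\text{loc})^TH^{-1}\mathbf{k}(\text{loc})$. The predictive mean is $\mathbf{k}(\text{loc})^T\hat{\mathbf{c}}$. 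Nothing further is needed, because the statement is conditional on the Laplace approximation.

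The main obstacle is not the algebra but the applicability of the approximation. Two points need care. First, the boundary penalty makes $\mathcal{L}$ only almost-everywhere twice differentiable, so $H$ must be taken as a one-sided or generalized Hessian at points where some $\hat\alpha_i$ lies exactly at $0$ or $1$. I would note that Figure~\ref{fig:full_width_estimation}(a) shows the constraints are inactive at convergence, so $\mathbf{b}=0$ there. Second, we must make sure $\mathbf{k}$ is formed over the same index set as $\mathbf{c}$, namely the $N_0$ tested locations rather than all $N$. I would state both as standing assumptions and keep the rest as the routine linear-Gaussian computation.
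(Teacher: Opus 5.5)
Your proposal is correct and follows the paper's proof: both rest on $\alpha(\text{loc}) = \mathbf{k}(\text{loc})^T\mathbf{c}$ being a fixed linear functional of $\mathbf{c}\sim\mathcal{N}(\hat{\mathbf{c}},H^{-1})$, which gives $\mathrm{Var}[\alpha(\text{loc})\mid\text{data}] = \mathbf{k}(\text{loc})^T H^{-1}\mathbf{k}(\text{loc})$. Your additional checks are sound but not needed, since the paper takes the Laplace approximation as given: that $H\succ 0$ because $2\lambda_{\text{reg}}K\succ 0$, that the squared-score term is the exact Hessian because $h_i$ is linear in $\alpha_i$, and that $\mathbf{k}$ must be indexed by the same locations as $\mathbf{c}$.
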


\begin{proof}
Since $\alpha(\text{loc}) = \mathbf{k}(\text{loc})^T \mathbf{c}$ is linear in $\mathbf{c} \sim \mathcal{N}(\hat{\mathbf{c}}, H^{-1})$:
\begin{equation}
\mathrm{Var}[\alpha(\text{loc})] = \mathbf{k}(\text{loc})^T \mathrm{Var}[\mathbf{c}] \mathbf{k}(\text{loc}) = \mathbf{k}(\text{loc})^T H^{-1} \mathbf{k}(\text{loc})
\end{equation}
by standard properties of linear transformations of Gaussian random vectors.
\end{proof}

\subsection{Properties of the Prior Prediction Variance}
\label{supp:prior_variance_properties}

\begin{proposition}[Properties of Prior Prediction Variance]
\label{prop:prior_variance_properties}
For a strictly positive definite kernel $K$ and regularization parameter $\lambda > 0$, the prior prediction variance:
\begin{equation}
\sigma^2_{\text{prior}}(\text{loc} \mid S) = K(\text{loc},\text{loc}) - \mathbf{k}(\text{loc})^T (K_S + \lambda I)^{-1} \mathbf{k}(\text{loc})
\end{equation}
satisfies:

\begin{enumerate}[label=(\roman*)]
    \item \textbf{Boundedness}: $0 \leq \sigma^2_{\text{prior}}(\text{loc} \mid S) \leq K(\text{loc},\text{loc})$ for all $\text{loc} \in \mathcal{H}$.

    \item \textbf{Monotonicity}: If $S_1 \subseteq S_2$, then $\sigma^2_{\text{prior}}(\text{loc} \mid S_2) \leq \sigma^2_{\text{prior}}(\text{loc} \mid S_1)$ for all $\text{loc} \in \mathcal{H}$.

    \item \textbf{Distance decay}: For stationary kernels $K(\text{loc}, \text{loc}') = k(\|\text{loc} - \text{loc}'\|)$ with $\lim_{r \to \infty} k(r) = 0$:
    \begin{equation}
    \lim_{\min_{s \in S} \|\text{loc} - \text{loc}_s\| \to \infty} \sigma^2_{\text{prior}}(\text{loc} \mid S) = K(\text{loc},\text{loc}).
    \end{equation}
\end{enumerate}
\end{proposition}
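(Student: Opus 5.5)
The plan is to read $\sigma^2_{\text{prior}}(\text{loc}\mid S)$ as the posterior variance of a Gaussian process. Take $g\sim\mathcal{GP}(0,K)$ observed at the locations in $S$ with independent noise of variance $\lambda$, so $y_s = g(\text{loc}_s)+\varepsilon_s$ with $\varepsilon_s\sim\mathcal{N}(0,\lambda)$. The vector $(g(\text{loc}), \mathbf{y}_S)$ is jointly Gaussian with covariance blocks $K(\text{loc},\text{loc})$, $\mathbf{k}(\text{loc})$ and $K_S+\lambda I$. The stated formula is then exactly the Schur complement $\mathrm{Var}[g(\text{loc})\mid \mathbf{y}_S]$. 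I would write the proof around this identity, because it turns each property into a standard fact about conditional variances.

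\textbf{(i) Boundedness.} Since $K_S$ is positive semidefinite and $\lambda>0$, the matrix $(K_S+\lambda I)^{-1}$ is positive definite. The subtracted term $\mathbf{k}^T(K_S+\lambda I)^{-1}\mathbf{k}$ is therefore nonnegative, which gives the upper bound $K(\text{loc},\text{loc})$. For the lower bound, the joint covariance matrix
\[
\begin{pmatrix} K(\text{loc},\text{loc}) & \mathbf{k}^T \\ \mathbf{k} & K_S+\lambda I\end{pmatrix}
= \begin{pmatrix} K(\text{loc},\text{loc}) & \mathbf{k}^T \\ \mathbf{k} & K_S\end{pmatrix} + \begin{pmatrix} 0 & 0\\ 0 & \lambda I\end{pmatrix}
\]
is positive semidefinite, being a kernel Gram matrix plus a PSD matrix. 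Its Schur complement with respect to the invertible block $K_S+\lambda I$ is then nonnegative.

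\textbf{(ii) Monotonicity.} It suffices to add one point, $S_2=S_1\cup\{s\}$, and then induct. I would use block inversion of $K_{S_2}+\lambda I$, that is, the rank-one update formula for Schur complements. This gives
\[
\sigma^2(\text{loc}\mid S_2)=\sigma^2(\text{loc}\mid S_1)-\frac{c(\text{loc},s\mid S_1)^2}{\sigma^2(s\mid S_1)+\lambda},
\]
where $c$ is the conditional covariance given $S_1$. The denominator is at least $\lambda>0$ by (i), so the variance cannot increase. Equivalently, conditioning a Gaussian on more observations never increases variance by the law of total variance. I will present the algebraic identity, since it avoids probabilistic language.

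\textbf{(iii) Distance decay.} With $S$ finite and fixed in cardinality, each entry of $\mathbf{k}(\text{loc})$ is $k(\|\text{loc}-\text{loc}_s\|)$, and this tends to $0$ as the minimal distance to $S$ tends to infinity. The matrix $(K_S+\lambda I)^{-1}$ does not depend on loc and has operator norm at most $1/\lambda$. Hence $0\le \mathbf{k}^T(K_S+\lambda I)^{-1}\mathbf{k}\le \|\mathbf{k}\|^2/\lambda\to 0$. Stationarity also gives $K(\text{loc},\text{loc})=k(0)$, a constant, so the limit is $K(\text{loc},\text{loc})$.

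The main obstacle I expect is making (ii) fully rigorous while keeping the bookkeeping clean. This means deriving the rank-one update, or justifying that "more conditioning reduces variance" applies with the $\lambda I$ noise term. Everything else is short matrix inequalities.
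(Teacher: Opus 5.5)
Your proposal is correct and follows the paper's proof essentially step for step. It uses the same positive-definiteness bound for the upper half of (i) and the same $\|\mathbf{k}\|^2/\lambda$ operator-norm bound for (iii); in (ii), your rank-one update is the paper's block Schur-complement identity $\mathbf{k}_{S_2}^T(K_{S_2}+\lambda I)^{-1}\mathbf{k}_{S_2} = \mathbf{k}_{S_1}^T(K_{S_1}+\lambda I)^{-1}\mathbf{k}_{S_1} + v^T P v$ specialized to one added point, with $P = 1/(\sigma^2(s\mid S_1)+\lambda)$ and $v = c(\text{loc},s\mid S_1)$, and your Schur-complement argument for the lower half of (i) is more explicit than the paper's bare appeal to ``non-negativity of variance''.
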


\begin{proof}
\textbf{(i) Boundedness.}
The lower bound follows from non-negativity of variance. For the upper bound, since $K_S + \lambda I \succ 0$, its inverse is positive definite. Therefore:
\begin{equation}
\mathbf{k}^T (K_S + \lambda I)^{-1} \mathbf{k} \geq 0 \implies \sigma^2_{\text{prior}} = K(\text{loc},\text{loc}) - \mathbf{k}^T (K_S + \lambda I)^{-1} \mathbf{k} \leq K(\text{loc},\text{loc}).
\end{equation}

\textbf{(ii) Monotonicity.}
Let $S_1 \subseteq S_2$ with $|S_1| = n$, $|S_2| = n + m$. Partition:
\begin{equation}
K_{S_2} = \begin{bmatrix} K_{S_1} & K_{12} \\ K_{21} & K_{22} \end{bmatrix},
\quad
\mathbf{k}_{S_2} = \begin{bmatrix} \mathbf{k}_{S_1} \\ \mathbf{k}_{\text{new}} \end{bmatrix}.
\end{equation}

By the Schur complement formula:
\begin{equation}
(K_{S_2} + \lambda I)^{-1} = \begin{bmatrix}
M & N \\ N^T & P
\end{bmatrix}
\end{equation}
where $M = (K_{S_1} + \lambda I)^{-1} + (K_{S_1} + \lambda I)^{-1}K_{12}P K_{21}(K_{S_1} + \lambda I)^{-1}$ and $P = [(K_{22} + \lambda I) - K_{21}(K_{S_1} + \lambda I)^{-1}K_{12}]^{-1} \succ 0$.

Computing the quadratic form,
\begin{align}
\mathbf{k}_{S_2}^T (K_{S_2}+\lambda I)^{-1} \mathbf{k}_{S_2}
&= \mathbf{k}_{S_1}^T (K_{S_1}+\lambda I)^{-1} \mathbf{k}_{S_1}
\;+\; v^T P\, v,
\end{align}
where $v = \mathbf{k}_{\text{new}} - K_{21}(K_{S_1}+\lambda I)^{-1} \mathbf{k}_{S_1}$. Since $P \succ 0$, $v^T P v \ge 0$, so
\begin{equation}
\mathbf{k}_{S_2}^T (K_{S_2}+\lambda I)^{-1} \mathbf{k}_{S_2}
\;\ge\; \mathbf{k}_{S_1}^T (K_{S_1}+\lambda I)^{-1} \mathbf{k}_{S_1},
\end{equation}
which by definition of $\sigma^2_{\text{prior}}$ yields $\sigma^2_{\text{prior}}(\text{loc} \mid S_2) \leq \sigma^2_{\text{prior}}(\text{loc} \mid S_1)$.

\textbf{(iii) Distance decay.}
For stationary kernels $K(\text{loc},\text{loc}') = k(\|\text{loc}-\text{loc}'\|)$ with $k(r)\to 0$, all entries of $\mathbf{k}(\text{loc})$ tend to $0$ as $\min_{s\in S}\|\text{loc}-\text{loc}_s\|\to\infty$. Since $K_S+\lambda I \succeq \lambda I$, the operator norm $\|(K_S+\lambda I)^{-1}\|_{\mathrm{op}} \le 1/\lambda$, so
\begin{equation}
\bigl|\mathbf{k}^T (K_S+\lambda I)^{-1}\mathbf{k}\bigr|
\;\le\; \|(K_S+\lambda I)^{-1}\|_{\mathrm{op}} \cdot \|\mathbf{k}\|^2
\;\le\; \frac{\|\mathbf{k}\|^2}{\lambda}
\;\to\; 0,
\end{equation}
hence $\sigma^2_{\text{prior}}(\text{loc}\mid S)\to K(\text{loc},\text{loc})$.
\end{proof}

\section{Density-Aware A-Optimal Design}
\label{supp:greedy_algorithm}
 
This section provides the algorithm referenced in Section~\ref{sec:entire_domain}
for selecting $N_0$ measurement locations that minimize average prediction
variance across the spatial domain.
 
\begin{algorithm}[H]
\caption{Density-Aware Greedy A-Optimal Location Selection}
\label{alg:greedy_design}
\begin{algorithmic}[1]
\Require Locations $\{\text{loc}_i\}_{i=1}^M$, kernel $K$, budget $N_0$, regularization $\lambda$, density parameter $\gamma$
\Ensure Selected subset $S$
\State Compute kernel matrix: $K_{\text{full}} \in \mathbb{R}^{M \times M}$
\State Compute local density: $\rho_i = \sum_{j \neq i} K(\text{loc}_i, \text{loc}_j)$ for all $i$
\State Compute weights: $w_i = \rho_i^\gamma / \sum_j \rho_j^\gamma$ \Comment{Normalize}
\State Initialize: $S \gets \emptyset$, $U \gets \{1,\ldots,M\}$
\For{$k = 1$ to $N_0$}
    \State $\text{best\_score} \gets +\infty$, $\text{best\_idx} \gets \text{null}$
    \For{each $j \in U$}
        \State $S_{\text{temp}} \gets S \cup \{j\}$
        \State $K_S \gets K_{\text{full}}[S_{\text{temp}}, S_{\text{temp}}]$
        \State $L \gets \text{cholesky}(K_S + \lambda I)$
        \State $\text{weighted\_var} \gets 0$
        \For{each $i \in U \setminus \{j\}$}
            \State $\mathbf{k}_i \gets K_{\text{full}}[i, S_{\text{temp}}]$
            \State $\mathbf{x}_i \gets L^{-T}(L^{-1}\mathbf{k}_i)$ \Comment{Solve using Cholesky}
            \State $\sigma^2_i \gets K_{\text{full}}[i,i] - \mathbf{k}_i^T \mathbf{x}_i$
            \State $\text{weighted\_var} \gets \text{weighted\_var} + w_i \cdot \sigma^2_i$
        \EndFor
        \If{$\text{weighted\_var} < \text{best\_score}$}
            \State $\text{best\_score} \gets \text{weighted\_var}$, $\text{best\_idx} \gets j$
        \EndIf
    \EndFor
    \State $S \gets S \cup \{\text{best\_idx}\}$, $U \gets U \setminus \{\text{best\_idx}\}$
\EndFor
\State \Return $S$
\end{algorithmic}
\end{algorithm}
 
\paragraph{Computational Complexity.} The algorithm performs $N_0$ iterations.
In iteration $k$, evaluating $(M-k)$ candidates each requiring Cholesky
factorization $O(k^3)$ and $(M-k)$ variance computations $O(k^2)$ gives
$O(M^2 k^2)$ per iteration. Total complexity: $O(M^2 N_0^3)$.
 
\paragraph{Density Parameter Selection.} The parameter $\gamma$ controls the
cluster-isolation tradeoff:
\begin{itemize}
    \item $\gamma > 0$: Prioritizes dense regions (population centers). Useful when inference targets are concentrated in clusters.
    \item $\gamma = 0$: Uniform weighting (balanced coverage). Recommended default for general applications.
    \item $\gamma < 0$: Prioritizes isolated points (rare events). Useful for outlier detection or when signals are expected in sparse regions.
\end{itemize}
 
For most applications, $\gamma \in [-0.5, 0.5]$ provides reasonable balance.
Cross-validation over $\gamma$ is possible but computationally expensive.

\section{Scalability Benchmarks}
\label{sec:scalability}
 
\begin{table}[h]
\centering
\caption{GPU benchmarks on a single RTX 3090 (24\,GB), \textbf{100 optimization iterations.}}
\label{tab:scalability}
\smallskip
\begin{tabular}{r l r r}
\toprule
$N$ & Method & Optimization (100 iter) & GPU Memory \\
\midrule
1{,}000   & Dense f32 & 0.04\,s   & 29\,MB  \\
5{,}000   & Dense f32 & 0.02\,s   & 513\,MB \\
10{,}000  & Dense f32 & 0.05\,s   & 2.0\,GB \\
20{,}000  & Dense f32 & 0.20\,s   & 8.0\,GB \\
30{,}000  & Dense f32 & 0.42\,s   & 18.0\,GB \\
40{,}000  & Dense f16 & 0.42\,s   & 16.0\,GB \\
50{,}000  & Dense f16 & 0.65\,s   & 15.0\,GB \\
75{,}000  & Tiled     & 33\,s     & 5.4\,GB \\
100{,}000 & Tiled     & 59\,s     & 5.4\,GB \\
200{,}000 & Tiled     & 3\,m\,58\,s  & 5.4\,GB \\
500{,}000 & Tiled     & 24\,m\,47\,s & 5.5\,GB \\
\bottomrule
\end{tabular}
\end{table}
 
\noindent\textbf{Dense float32} (full kernel in memory) handles $N \leq 30$K
with the complete pipeline, optimization, Hessian, and Cholesky, in under
4 seconds.
\textbf{Dense float16} halves memory, extending the dense method to $N = 50$K
with optimization still under 1 second.
\textbf{Tiled matrix-vector products} compute kernel tiles on-the-fly without
storing the full Gram matrix, keeping GPU memory flat at $\sim 5.4$\,GB
regardless of $N$, reaching $N = 500$K with \textbf{zero approximation error}.
The tradeoff is purely computational: tiled is $\sim 50\times$ slower than
dense at the same $N$, but any problem size fits on any GPU.

\end{document}